\def\int{\displaystyle\!int}
\def\lim{\displaystyle\!lim}
\def\sum{\displaystyle\!sum}
\def\sup{\displaystyle\!sup}
\def\inf{\displaystyle\!inf}
\def\cap{\displaystyle\!cap}
\def\max{\displaystyle\!max}
\def\min{\displaystyle\!min}
\def\frac{\displaystyle\!frac}
\let\oldsection\section
\renewcommand\section{\setcounter{equation}{0}\oldsection}
\newtheorem{defn}{Definition}[section]
\newtheorem{thm}{Theorem}[section]
\newtheorem{pro}{Proposition}[section]
\newtheorem{lem}{Lemma}[section]
\newtheorem{re}{Remark}[section]
\newenvironment{proof}{{\noindent \it \bf Proof :}}{{\hfill$\Box$}\\}
\title{Rogue peakon, well-posedness, ill-posedness and blow-up phenomenon for an integrable Camassa-Holm  type equation}
\author{
{\bf\large Mingxuan Zhu}\\
\small School of Mathematical Sciences\\
\small Qufu Normal University, Qufu, 273100, P. R. China\\
\small E-mail: mxzhu@qfnu.edu.cn\\
{\bf\large Zhenteng Zeng}\\
\small School of Mathematical and Statistical Sciences \\
\small University of Texas Rio Grande Valley, Edinburg, TX 78539, USA \\
\small E-mail: zhenteng.zeng01@utrgv.edu \\
{\bf\large Zaihong Jiang }\\
\small Department of Mathematics \\
\small Zhejiang Normal University, Jinhua, 321004, P. R. China \\
\small E-mail: jzhong@zjnu.cn \\
{\bf\large Baoqiang Xia}\\
\small School of Mathematics and Statistics \\
\small Jiangsu Normal University, Xuzhou, 221116, P. R. China \\
\small E-mail: xiabaoqiang@126.com \\
{\bf\large Zhijun Qiao\thanks{Corresponding author} }\\
\small School of Mathematical and Statistical Sciences \\
\small University of Texas Rio Grande Valley, Edinburg, TX 78539, USA \\
\small E-mail: zhijun.qiao@utrgv.edu}
\date{}
\begin{document}

\maketitle

\begin{abstract} In this paper, we study an integrable Camassa-Holm (CH) type equation with quadratic nonlinearity.  The CH type equation is shown integrable through a Lax pair, and particularly the equation is found to possess a new kind of peaked soliton (peakon) solution - called {\sf rogue peakon}, that is given in a rational form with some logarithmic function, but not a regular traveling wave. We also provide multi-rogue peakon solutions. Furthermore, we discuss the local well-posedness of the solution in the Besov space $B_{p,r}^{s}$ with $1\leq p,r\leq\infty$, $s>\max \left\{1+1/p,3/2\right\}$ or $B_{2,1}^{3/2}$, and then prove the ill-posedness of the solution in $B_{2,\infty}^{3/2}$. Moreover, we establish the global existence and blow-up phenomenon of the solution, which is, if $m_0(x)=u_0-u_{0xx}\geq(\not\equiv) 0$, then the corresponding solution  exists globally, meanwhile, if $m_0(x)\leq(\not\equiv) 0$, then the corresponding solution blows up in a finite time.

\noindent \textbf{Keywords:} Camassa-Holm  type equation, Lax pair, rogue peakon; well-posedness; ill-posedness; Besov space; blow-up phenomenon.

\noindent \textbf{Mathematics Subject Classifications (2020):} 37K10; 35G25; 35L05
\end{abstract}
\section{Introduction and rogue peakons}
In this paper, we consider the following Camassa-Holm (CH) type equation that was proposed recently in \cite{XQZ2013, Zeng2023}
\begin{eqnarray} \label{1.0}
m_t+mu+(mu)_x=0,\quad m=u-u_{xx}.
\end{eqnarray}
The above CH type equation (\ref{1.0}) is actually a reduced case of the following
integrable two-component peakon system \cite{XQZ2013}
\begin{align}\label{2component}
    \begin{cases}
    m_{t}=F+F_x- \frac{1}{2}m(uv-u_x v_x +uv_x-u_x v),\\
     n_{t}=-G+G_x+ \frac{1}{2}u(uv-u_x v_x +uv_x-u_x v),\\
     m=u-u_{xx},\\
     n=v-v_{xx},
\end{cases}
\end{align}
where $F$ and $G$ are two arbitrary functions of $u$,\,$v$ and their derivatives satisfying $mG=nF.$
Let $v=0$, $G=0$, and $F=-mu$ in (\ref{2component}), then Eq. (\ref{2component}) is exactly reduced to the scalar model (\ref{1.0}).
Through a straightforward computation, it is not hard for us to see the scalar CH type equation (\ref{1.0}) possesses the following Lax pair $\psi_x=U\psi,\: \psi_{t}=V\psi,$
\begin{equation}\label{Laxpair}
U=\frac{1}{2} \begin{pmatrix}
  -1 & \lambda m\\
  0 & 1
\end{pmatrix},\quad  V=-\frac{1}{2}\begin{pmatrix}
    \lambda^{-2} & -\lambda^{-1}(u-u_x)+\lambda mu\\
    0& -\lambda^{-2}
\end{pmatrix},
\end{equation}
where $\psi$ is the 2-dimensional eigenvector, $\lambda$ refers to the eigenvalue parameter, and $U, V$ satisfy the following compatibility condition -- zero curvature representation $U_{t}-V_{x}+[U,V]=0.$ Therefore, the CH type equation (\ref{1.0}) is completely integrable in the sense of Lax pair.\\

{\bf Rogue Peakon}

The most attractive property of the CH type equation (\ref{1.0}) is that it has the following so-called {\sf Rogue Peakon} (see the formula given in Eq. (\ref{roguepeakon}) below and {\bf Remark 1}).

Let $u=p(t)e^{-|x-q(t)|}$. One may  compute
\begin{eqnarray*}
 u_x=-pe^{-|x-q|}sign(x-q), \nonumber\\
m=u-u_{xx}=2p\delta(x-q), \nonumber\\
m_t=2p'\delta(x-q)-2pq'\delta'(x-q)\nonumber\\
m_x=2p\delta'(x-q),
\end{eqnarray*}
where $\delta$ the Dirac delta and $\delta'$ is its derivative.
Substituting the above formulas to \eqref{1.0}, we get
\begin{eqnarray*}
(2p'+2p^2e^{-|x-q|}-2p^2e^{-|x-q|}sign(x-q))\delta(x-q)-2p(q'-pe^{-|x-q|})\delta'(x-q)=0.
\end{eqnarray*}
Evaluating this on a test function $\varphi$ and integratig by parts, we have
\begin{eqnarray*}
\langle(2p'+2p^2e^{-|x-q|}-4p^2e^{-|x-q|}sign(x-q))\delta(x-q),\varphi\rangle+\langle2p(q'-pe^{-|x-q|})\delta(x-q),\varphi'\rangle=0.
\end{eqnarray*}
The above equation  holds for any $\varphi$. Thus, $p(t)$ and $q(t)$ satisfy
\begin{eqnarray*}
p'=-p^2,\quad q'=p.
\end{eqnarray*}
Then, we arrive at
\begin{eqnarray}
 p(t)&=&\frac{1}{t-A},\quad q(t) \ = \ \ln|t-A|+B, \nonumber\\
 u(x,t) &=& \frac{1}{t-A}e^{-|x-\ln|t-A|-B|}, \label{roguepeakon}
\end{eqnarray}
where $A$, $B$ are two arbitrary constants. The figure of one- rogue peakon (\ref{roguepeakon}) is plotted below.
\\
\begin{figure}[H]
\begin{center}
 \includegraphics[height=4.6cm,width=6cm]{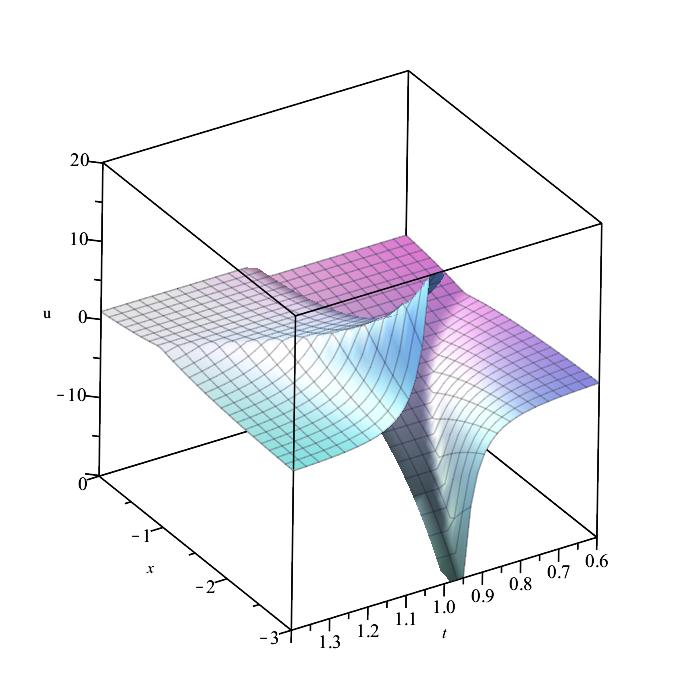}
\end{center}
\caption{The single rogue peakon solution (\ref{roguepeakon})}
with $A=1, B=0$ and plotting for $x\in (-3,0), \ t\in (0.6,1.4)$.
Apparently, the solution $u(x,t)$ is continuous everywhere for $x$ and $t$ except $t=1$ because it has a finite jump $2e^{-x}$ at $t=1$ (see {\bf Remark 1} below).
\label{fig:OneRP}
\end{figure}

$  $

{\bf Remark 1:} Apparently, the single peakon solution given in Eq. (\ref{roguepeakon}) is not a traveling solitary wave. The amplitude of the peakon is the rational function $p(t)= \frac{1}{t-A}$ that has a sole singular point at $t=A$. The solution $u(x,t)$ is continuous everywhere for both variables $x$ and $t$ except $t=A$ because it has a finite jump at $t=A$ (see the left and right limits computed below):
\begin{align*}
u(A^{-})&=\lim_{t\to A^{-}} u =\lim_{t\to A^{-}} \frac{e^{-|x-\ln|t-A|-B|}}{t-A},\\
    & =-e^{-x+B},\\
u(A^{+})&=\lim_{t\to A^{+}} u =\lim_{t\to A^{+}} \frac{e^{-|x-\ln|t-A|-B|}}{t-A},\\
    & =e^{-x+B}.
\end{align*}
This tells us that the peakon amplitude suddenly changes from negative to positive nearby $t=A$, that is, right after the crest, it is followed up immediately by the trough. Since $p(t)$ is a rational function and the peakon solution is in a non-traveling wave shape, let us call this wave solution (\ref{roguepeakon}) as {\sf Rogue Peakon}. This is the significant difference from all existing peakon solutions in literature.\\

{\bf Multi-Rogue Peakons}

 \noindent Let us now seek for multi-rogue peakon solutions in the form of
\begin{equation}\label{multi-rogue}
  u(x,t)=\sum_{k=1}^{N}p_{k}(t)e^{-|x-q_{k}(t)|}.
\end{equation}
Substituting Eq. (\ref{multi-rogue}) to the CH type equation (\ref{1.0}), we obtain the following $2N$ ODE system in $t$:
\begin{eqnarray} \label{2NODE}
\left\{
  \begin{array}{ll}
    \dot{p}_{k}(t)=-p_{k}(t)\sum_{j=1}^{N}p_{j}(t)e^{-|q_{k}(t)-q_{j}(t)|}, \\
    \dot{q}_{k}(t)=\sum_{j=1}^{N}p_{j}(t)e^{-|q_{k}(t)-q_{j}(t)|}.
  \end{array}
\right.
\end{eqnarray}
Solving the above system (\ref{2NODE}) yields solutions
\begin{eqnarray*} 
\left\{
  \begin{array}{ll}
p_{k}(t)= \frac{1}{Nt-A_k}\\
q_{k}(t)=\ln |Nt-A_k| +B_k,\quad k=1,2,...,N,
 \end{array}\right.
\end{eqnarray*}
where $A_k$ and $B_k$ are arbitrary constants and $N$ refers to number of peaks. Therefore, we get the $N$-rogue peakon solutions (\ref{multi-rogue}) solved with
\begin{eqnarray*}
  u(x,t)&=& \sum_{k=1}^{N}\frac{e^{-|x-\ln |Nt-A_k| -B_k|}}{Nt-A_k}.
\end{eqnarray*}

As we see the $N$ rogue peakon solution $u(x,t)$ solved above, it can be understood as a linear superposition of non-traveling peaked waves in $x$ with coefficients of $t$-related rational form for an nonlinear intgerable PDE. This property is usually true for a linear differential equation, instead of a nonlinear PDE. But here in our paper, we find the rogue peakon has the linear superposition for our integrable nonlinear CH type equation (\ref{1.0}), which is the significant difference from other multi-soliton interactions and nonlinear PDEs. For two- rogue peakons ($N=2$, see Figure \ref{fig:TwoRP})
\begin{eqnarray} \label{multi-rogue2}
 u(x,t) = \frac{e^{-|x-\ln |2t-A_1| -B_1|}}{2t-A_1} +\frac{e^{-|x-\ln |2t-A_2| -B_2|}}{2t-A_2},
   \end{eqnarray}
and three- rogue peakons ($N=3$, see Figure \ref{fig:ThreeRP})
\begin{eqnarray} \label{multi-rogue3}
 u(x,t) = \frac{e^{-|x-\ln |3t-A_1| -B_1|}}{3t-A_1} +\frac{e^{-|x-\ln |3t-A_2| -B_2|}}{3t-A_2}+\frac{e^{-|x-\ln |3t-A_3| -B_3|}}{3t-A_3},
   \end{eqnarray}
 we have the graphs plotted below.

\begin{figure}[H]
\begin{minipage}[t]{0.5\linewidth}
\begin{center}
\includegraphics[height=4.0cm,width=5.5cm]{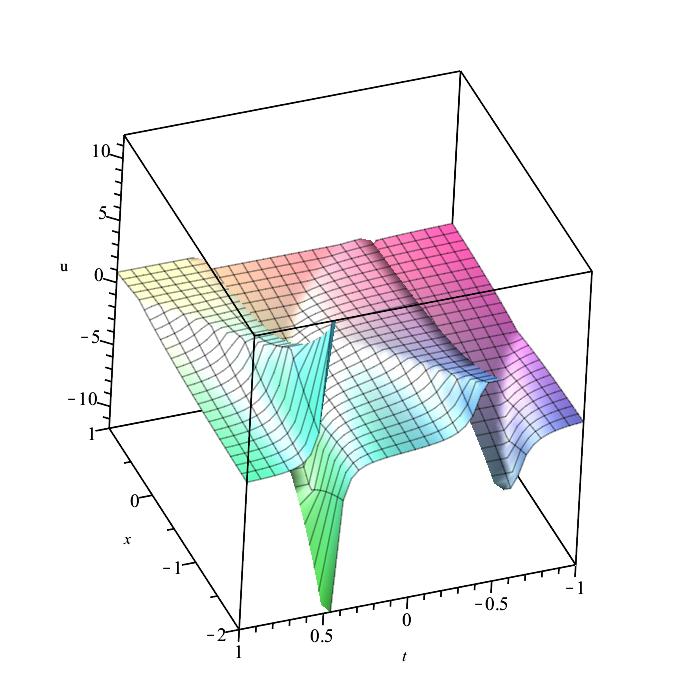}
 \end{center}
\caption{The two-rogue peakon solution (\ref{multi-rogue2})} with $A_1=-1, A_2=1, B_1=-0.5, B_2=0.5$ and plotting for $x\in (-2,1), \ t\in (-1,1)$.
Apparently, the solution $u(x,t)$ is continuous everywhere for $x$ and $t$ except $t=\pm 0.5$ because it has a finite jump $2e^{-x+ 0.5}$ at $t=0.5$ and $2e^{-x-0.5}$ at $t=-0.5$, respectively.
\label{fig:TwoRP}
\end{minipage}
\hspace{1.9ex}
\begin{minipage}[t]{0.5\linewidth}
\begin{center}
\includegraphics[height=4.0cm,width=5.5cm]{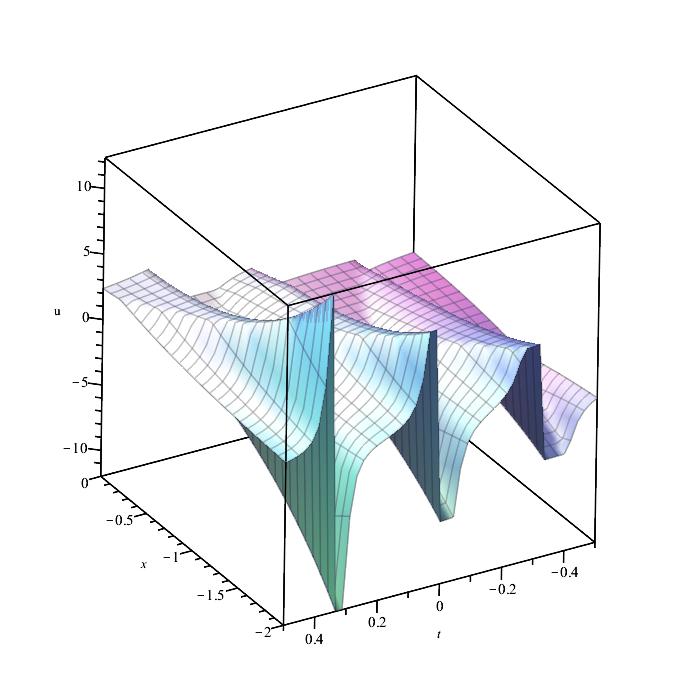}
 \end{center}
\caption{ The three-rogue peakon solution (\ref{multi-rogue3})} with $A_1=-1, A_2=0, A_3=1, B_1=-0.5, B_2=0, B_3=0.5$ and plotting for $x\in (-2,0), \ t\in (-0.5,0.5)$.
Apparently, the solution $u(x,t)$ is continuous everywhere for $x$ and $t$ except $t=0, \pm 0.33$ because it has a finite jump $2e^{-x}$ at $t=0$, $2e^{-x+0.33}$ at $t=0.33$ and $2e^{-x-0.33}$ at $t=-0.33$, respectively.
\label{fig:ThreeRP}
\end{minipage}
\end{figure}

 We already noticed that this is a linear superposition phenomenon with $t$-fractional coefficients for nonlinear peakons crossing each peak $x=\ln |Nt-A_k|+B_k$, but the shape of the rogue peakon between any two consecutive peaks
 is still taking on a dynamical interaction, and none of them are traveling wave type. See this phenomena in the following three figures plotted for $2-$rogue peakons.

 \begin{figure}[H]
\begin{minipage}[t]{0.33\linewidth}
\includegraphics[height=3.6cm,width=5cm]{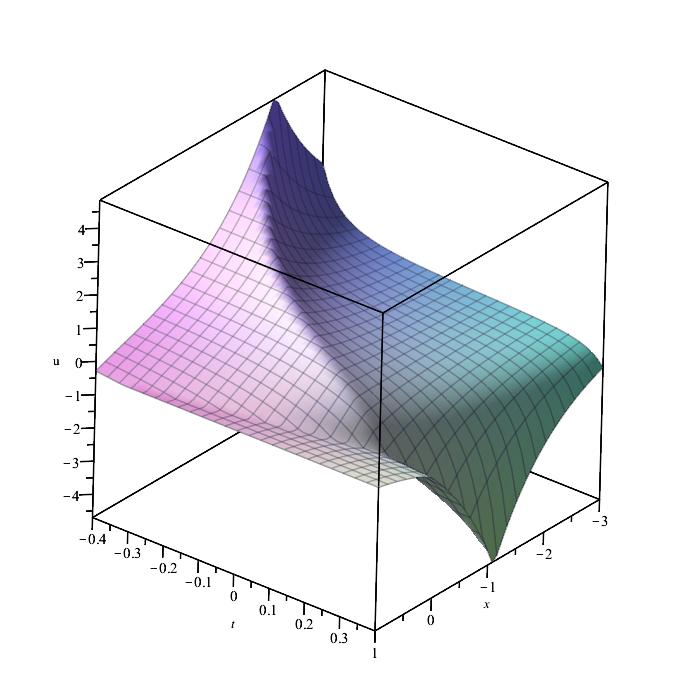}
\caption{ \small The two- rogue peakon solution (\ref{multi-rogue2}) with $A_1=-1, A_2=1, B_1=-0.5, B_2=0.5$ and plotting for $x\in (-3,1), \ t\in (-0.4,0.4)$.
Apparently, the solution $u(x,t)$ is continuous for $x\in (-3,1), \ t\in (-0.4,0.4)$ and two peakons (actually peakon and anti-pekaon) get interacted from positive to negative in this region.} 
\end{minipage}
\hspace{1.9ex}
\begin{minipage}[t]{0.33\linewidth}
\includegraphics[height=3.60cm,width=5.0cm]{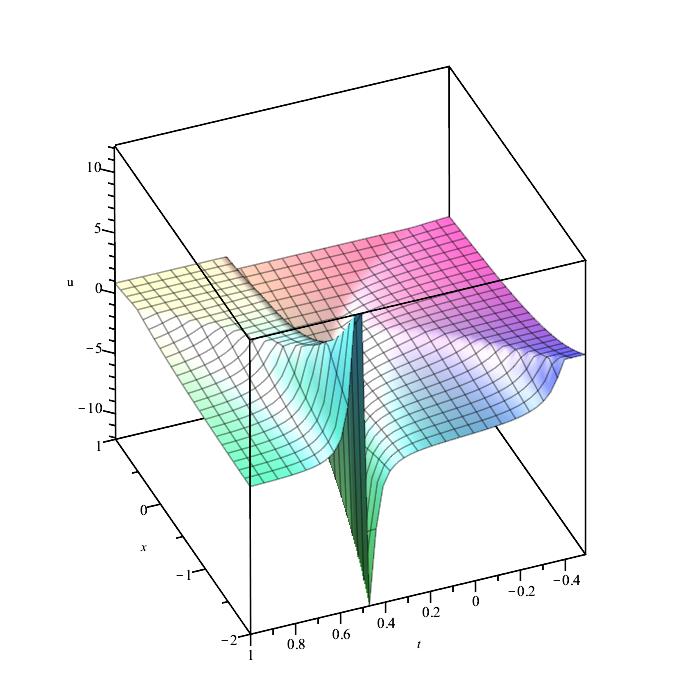}
\caption{ \small The two- rogue peakon solution (\ref{multi-rogue2}) with $A_1=-1, A_2=1, B_1=-0.5, B_2=0.5$ and plotting for $x\in (-2,1), \ t\in (-0.49,1)$.
Apparently, the solution $u(x,t)$ is continuous everywhere for $x$ and $t$ except $t=0.5$ because it has a finite jump $2e^{-x+ 0.5}$ at $t=0.5$.} But at the moment of going to and departing from 
$t=0.5$, two rogue peakons still get interacted. 
\end{minipage}
\hspace{1.9ex}
\begin{minipage}[t]{0.33\linewidth}
\includegraphics[height=3.60cm,width=5.0cm]{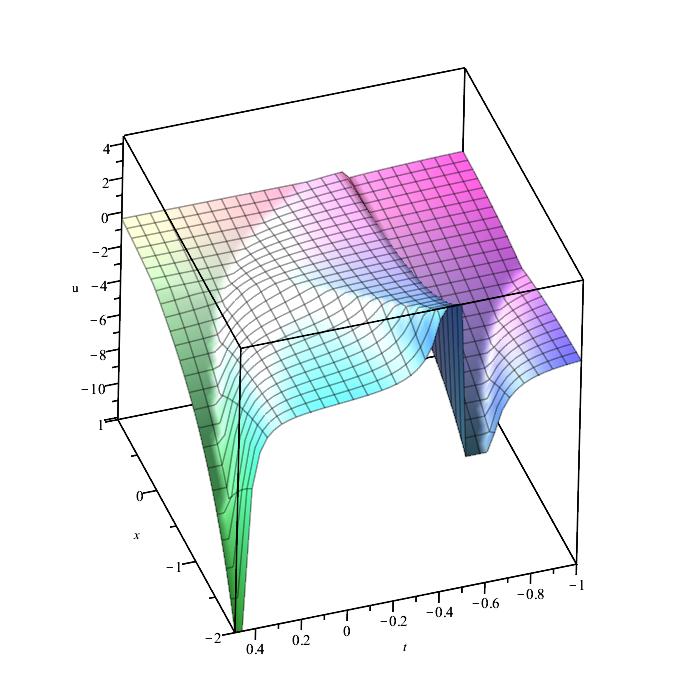}
\caption{ \small The two- rogue peakon solution (\ref{multi-rogue2}) with $A_1=-1, A_2=1, B_1=-0.5, B_2=0.5$ and plotting for $x\in (-2,1), \ t\in (-1,0.49)$.
Apparently, the solution $u(x,t)$ is continuous everywhere for $x$ and $t$ except $t=-0.5$ because it has a finite jump $2e^{-x- 0.5}$ at $t=-0.5$. But at the moment of going to and departing from 
$t=-0.5$, two rogue peakons still get interacted.}
\end{minipage}
\end{figure}

{\bf Background}

The system \eqref{1.0} is highly related to the celebrated Camassa-Holm(CH) equation
\begin{eqnarray*}
m_t+2mu_x+m_x u=0,\quad m=u-u_{xx},
\end{eqnarray*}
which was introduced in \cite{RDDH} and was also obtained by Fokas and Fuchssteiner \cite{FF} as a 
nonlinear partial differential equation with bi-Hamiltonian structures and inherited symmetries. It was shown 
completely integrable with the remarkable properties
of possessing 
Lax pair \cite{RDDH}, great features on the circle \cite{CM}, the integrable hierarchy extension with finite-dimensional constrained flows and algebro-geometric solutions on a symplectic submanifold connected to integrable negative order flows \cite{Qiao2003}, inverse scattering transform \cite{CGI} and multi-soliton solutions via Darboux transformation \cite{XQZ2016}, infinitely many conservation laws \cite{Le1} and geometric background of the CH equation \cite{Re}, and explicit solutions including classical soliton, cuspon and peakon solutions \cite{RDDH, Le2, QZ2006}.

Local well-posedness for the initial data in $H^s$ with $s>3/2$ was proved in \cite{CE1,LO}, while Danchin have extended the well-posedness to the Besov space $B_{p,r}^{s}$ ($s>\max \left\{1+1/p,3/2\right\}$, $1\leq p,r\leq\infty$) or $B_{2,1}^{3/2}$ in \cite{Dan2001,Dan2003}. There have been extensive studies on Blow-up phenomenon of solutions, cf. \cite{Bra,C,CE1,CE2,JNZ,LO,Mc} and references therein. For global existence, McKean \cite{Mc} (See also \cite{JNZ} for a simple proof) established a necessary and sufficient condition on the initial value $u_0$, which depends on the shape of $m_0=u_0-u_{0xx}$. More precisely, it was proved that the Camassa-Holm equation breaks if and only if some portion of the positive part of $m_0$ lies to the left of some portion of its negative part. Global conservative solution and global dissipative solution for the Camassa-Holm equation have been shown in \cite{BC,BC1}. As for the asymptotic properties,  persistence properties and unique continuation of solutions of the Camassa-Holm equation were given in \cite{HMPZ}, the long time behavior for the support of momentum density of the Camassa-Holm equation was discussed in \cite{JZZ}, and the orbital stability of the peakons was proved in \cite{CS}. 

$ $

{\bf Remark 2:} An 
observation with
 the substitution $m=u-u_{xx}=\widetilde{m} e^{-x}$, our equation  \eqref{1.0} is sent 
 to the following form
\begin{eqnarray*}
\widetilde{m}_t+\widetilde{m}_x u+\widetilde{m}u_x=0,  \quad 
m=u-u_{xx}.
\end{eqnarray*}
This form exactly falls in the Holm-Staley $b$-family equation \cite{HS} when $b=1$
\begin{eqnarray}\label{bfamily}
m_t+m_xu+bmu_x=0,\quad m=u-u_{xx}.
\end{eqnarray}
As shown in \cite{VN2009}, the Holm-Staley $b$-family equation (\ref{bfamily}) is integrable {\sf only} when $b=2$ and $b=3$, namely,
$b=2$ is the case for the integrable Camassa-Holm equation \cite{RDDH} while $b=3$ is the case for the integrable Degasperis-Procesi equation \cite{ADMP}. Although our equation  \eqref{1.0} can be transformed to the $b$-family with $b=1$, we already show that our equation \eqref{1.0} is completely integrable in the sense of Lax pair (see Eq. (\ref{Laxpair}) listed above).      
More mathematical properties for the Holm-Staley $b$-family equation were investigated in \cite{EY,GLT,Zhou}.

$ $

The aim of this paper is to study the Cauchy problem for system \eqref{1.0},
\begin{eqnarray} \label{1.1}
\left\{
  \begin{array}{ll}
&m_t+mu+(mu)_x=0,\quad m=u-u_{xx},\quad t\in\mathbb{R}^+,\quad x\in\mathbb{R}\\
&u(x,0)=u_0(x),
\end{array}
\right.
\end{eqnarray}
and establish the local and global well-posedness, ill-posedness and blow-up phenomenon of solutions. Although our system is similar to the Holm-Staley $b$-family equation, we obtained some completely different results.

The rest of this paper is organized as follows. In Section \ref{LW}, we discuss the local well-posedness of the solutions in some Besov spaces. Section \ref{IP} aims at proving the ill-posedness in $B_{2,\infty}^{3/2}$. Sufficient conditions on the initial data to guarantee global existence and blow-up phenomenon are established in Section \ref{SGB}. In the Appendix, we recall some conclusions on the properties of Littlewood-Paley decomposition, the nonhomogeneous Besov spaces and the theory
of the transport equation.

\section{Local well-posedness}\label{LW}
The main goal of this section is to establish the local well-posedness of the Cauchy problem for \eqref{1.0} in the nonhomogeneous Besov spaces, for which we introduce the following spaces.
\begin{defn}
For $T>0$, $s\in\mathbb{R}$ and $1\leq p\leq +\infty$, we set
\begin{eqnarray*}
&&E_{p,r}^s(T)=:C([0, T];B_{p,r}^{s})\cap C^1([0, T];B_{p,r}^{s-1}),\quad if~~ r<\infty,\\
&&E_{p,\infty}^s(T)=:L^\infty([0, T];B_{p,\infty}^{s})\cap Lip([0, T];B_{p,\infty}^{s-1}),
\end{eqnarray*}
and \begin{eqnarray*} E_{p,r}^s=:\bigcap_{T>0}E_{p,r}^s(T).\end{eqnarray*}
\end{defn}

We rewrite \eqref{1.1} as
\begin{eqnarray*}
u_t-u_{xxt}+u^2+2uu_x=uu_{xx}+u_{x}u_{xx}+uu_{xxx}.
\end{eqnarray*}
Denote $P(D)=(1-\partial_x^2)^{-1}$, \eqref{1.1} could be reformulated as
\begin{eqnarray} \label{1.3}
u_t+uu_x+F(u)=0,
\end{eqnarray}
where
\begin{eqnarray}\label{Fu}
F(u)=F_1(u)+F_2(u)+F_3(u)
\end{eqnarray}
with $F_1=\frac{1}{2}u^2$, $F_2=P(D)\left(\frac{1}{2}u^2+u_x^2\right)$ and
$F_3=\partial_x P(D)\left(\frac{1}{2}u^2+u_x^2\right)$.
\subsection{Local well-posedness in the Besov spaces}
Our main local existence result is given by the following theorem.
\begin{thm}\label{LSS}
Suppose that $1\leq p,r\leq\infty$ and $s>\max \left\{1+\frac{1}{p},\frac{3}{2}\right\}$. If
$u_0\in B_{p,r}^{s}$, then there exists a time $T>0$ such that the Cauchy problem \eqref{1.1} with initial data $u_0$ has a unique
solution $u\in E_{p,r}^s(T)$. The map $u_0\mapsto u$ is continuous from a neighborhood of $u_0$ in $B_{p,r}^{s}$ into
$$C([0, T];B_{p,r}^{s'})\cap C^1([0, T];B_{p,r}^{s'-1})$$
for every $s'<s$ when $r=+\infty$ and $s'=s$ whereas $r<+\infty$.
\end{thm}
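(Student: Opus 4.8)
The plan is to recast the Cauchy problem \eqref{1.1} in the transport form \eqref{1.3}, namely $u_t + u u_x = -F(u)$ with $F = F_1 + F_2 + F_3$ as in \eqref{Fu}, and to run the classical Friedrichs-type iteration scheme together with the linear transport estimates in nonhomogeneous Besov spaces recalled in the Appendix. The preliminary point is a stability bound for $F$: since $s > \max\{1+1/p, 3/2\}$, both $B_{p,r}^{s}$ and $B_{p,r}^{s-1}$ are Banach algebras, and because $P(D) = (1-\partial_x^2)^{-1}$ and $\partial_x P(D)$ are smoothing of order two and one, one gets $F_1 = \tfrac12 u^2 \in B_{p,r}^{s}$, $F_2 \in B_{p,r}^{s+1} \hookrightarrow B_{p,r}^{s}$ and $F_3 \in B_{p,r}^{s}$ whenever $u \in B_{p,r}^{s}$, with $\|F(u)\|_{B_{p,r}^{s}} \lesssim \|u\|_{B_{p,r}^{s}}^{2}$; the Besov product laws (whose precise range is exactly what forces the restriction on $s$) also give the companion Lipschitz-type bound in which $\|F(u)-F(v)\|_{B_{p,r}^{s-1}}$ is controlled by $\|u-v\|_{B_{p,r}^{s-1}}$ times the $B_{p,r}^{s}$-norms of $u$ and $v$.

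First I would set $u^{(0)} = 0$ and define $u^{(n+1)}$ as the solution of the linear transport problem $\partial_t u^{(n+1)} + u^{(n)} \partial_x u^{(n+1)} = -F(u^{(n)})$ with $u^{(n+1)}(0) = S_{n+1} u_0$, solvable by the linear transport theory of the Appendix. Plugging into the a priori estimate for transport equations the bound $\|u^{(n+1)}(t)\|_{B_{p,r}^{s}} \lesssim \big(\|u_0\|_{B_{p,r}^{s}} + \int_0^t \|F(u^{(n)})\|_{B_{p,r}^{s}}\, d\tau\big)\exp\big(C\int_0^t \|\partial_x u^{(n)}\|_{B_{p,r}^{s-1}}\, d\tau\big)$, and using $B_{p,r}^{s-1} \hookrightarrow L^{\infty}$ together with the quadratic estimate on $F$, an induction/bootstrap argument yields a time $T = T(\|u_0\|_{B_{p,r}^{s}}) > 0$ on which $(u^{(n)})_n$ is bounded in $L^{\infty}([0,T];B_{p,r}^{s})$; by the equation it is then also bounded in $C^{1}([0,T];B_{p,r}^{s-1})$, hence in $E_{p,r}^{s}(T)$.

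Next I would establish convergence by estimating the successive differences $w^{(n)} = u^{(n+1)} - u^{(n)}$, which solve a transport equation driven by $-(w^{(n-1)} \partial_x u^{(n)}) - (F(u^{(n)}) - F(u^{(n-1)}))$ together with a high-frequency remainder of $u_0$ of size $O(2^{-n})$; the transport estimate at the lower level $B_{p,r}^{s-1}$, the Lipschitz bound on $F$ and the uniform bounds of the previous step give $\|w^{(n)}\|_{L^{\infty}_{T}B_{p,r}^{s-1}} \le C T \|w^{(n-1)}\|_{L^{\infty}_{T}B_{p,r}^{s-1}} + C 2^{-n}\|u_0\|_{B_{p,r}^{s}}$, so after a further shrinking of $T$ the sequence is Cauchy in $C([0,T];B_{p,r}^{s-1})$ and converges to some $u$ there. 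Interpolating with the uniform $B_{p,r}^{s}$-bound upgrades this to convergence in $C([0,T];B_{p,r}^{s'})$ for every $s'<s$, which is enough to pass to the limit in the nonlinear terms and see that $u$ solves \eqref{1.3}; a Fatou argument then gives $u \in L^{\infty}([0,T];B_{p,r}^{s}) \cap \mathrm{Lip}([0,T];B_{p,r}^{s-1})$, and when $r < \infty$ the genuine time continuity $u \in C([0,T];B_{p,r}^{s})$ follows from the continuity-in-time lemma for transport equations in the Appendix (this last step fails for $r = \infty$, which is precisely why the continuous-dependence statement there allows only $s'<s$). Uniqueness is then immediate: the difference of two solutions with the same data satisfies a transport equation with a source controlled as above, and Gronwall in $B_{p,r}^{s-1}$ forces it to vanish.

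Finally, for the continuous dependence of the data-to-solution map I would first prove the Lipschitz estimate $\|u-v\|_{C_{T}B_{p,r}^{s-1}} \le C\|u_0-v_0\|_{B_{p,r}^{s-1}}$ between two solutions by the same transport computation; combined with the uniform bounds and interpolation this immediately yields continuity of $u_0 \mapsto u$ into $C([0,T];B_{p,r}^{s'})$ for every $s'<s$, which already proves the $r = \infty$ case. To push the continuity to $s'=s$ when $r < \infty$ I would run the Bona--Smith-type argument: approximate $u_0$ by mollified data $S_{N}u_0$, exploit the uniform bounds and the convergence at level $s-1$, and control the high-frequency tails uniformly in $N$ via the transport-equation continuity lemma. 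I expect the main obstacle to be exactly this last part — obtaining strong (rather than merely weak-$\ast$) time continuity at the critical index $B_{p,r}^{s}$ and the corresponding full continuous dependence — since the quadratic term $u_x^{2}$ appearing inside $F_2$ and $F_3$ makes the bookkeeping at the endpoint regularity tight and must be reabsorbed through the smoothing of $P(D)$ and $\partial_x P(D)$ at that same level; the remaining steps are routine, if lengthy, applications of the Besov product laws and the linear transport estimates.
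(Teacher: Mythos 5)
Your proposal is correct and follows essentially the same route as the paper: the transport reformulation \eqref{1.3}, the quadratic and Lipschitz Besov estimates on $F$, the Friedrichs iteration with data $S_{n+1}u_0$ and the transport estimates of Lemma \ref{T}, convergence in $C([0,T];B_{p,r}^{s-1})$, Fatou plus interpolation, time continuity via Lemma \ref{EU}, and uniqueness/continuous dependence from the $B_{p,r}^{s-1}$ stability estimate. The only departures are cosmetic or supplementary: you get the Cauchy property by contracting successive differences after shrinking $T$ (the paper estimates $u^{(n+m+1)}-u^{(n+1)}$ and induces a factorially decaying bound on the original $T$), and you explicitly flag a Bona--Smith argument for the $s'=s$, $r<\infty$ continuity, a point the paper disposes of tersely by appealing to Proposition \ref{pro1} and the standard Danchin references.
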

\begin{re}
When $p=r=2$, the Besov space $B_{2,2}^{s}$ coincides with the Sobolev
space $H^s$. Theorem \ref{LSS} implies that
we can obtain the local well-posedness for the Cauchy problem \eqref{1.1} under the condition $u_0\in H^s$ with $s>\frac{3}{2}$.
\end{re}

In what follows, we denote $C>0$ as a generic constant only depending on $p,r,s$.
Uniqueness and continuity with respect to the initial data are an immediate corollary of the following
result.
\begin{pro}\label{pro1}
Suppose that $1\leq p,r\leq\infty$ and $s>\max \left\{1+\frac{1}{p},\frac{3}{2}\right\}$. Let
$u,v\in L^\infty([0,T);B_{p,r}^{s})\cap C([0,T];\mathscr{S}')$ be two given solutions to the Cauchy problem \eqref{1.1} with the initial data $u_0,v_0\in B_{p,r}^{s}$. Then for every $t\in [0,T]$, we have
\begin{eqnarray*}
\|u(t)-v(t)\|_{B_{p,r}^{s-1}}
\leq \|u_0-v_0\|_{B_{p,r}^{s-1}}exp\left\{C\int_{0}^{t}\|u(\tau)\|_{B_{p,r}^{s}}+\|v(\tau)\|_{B_{p,r}^{s}}d\tau\right\}.
\end{eqnarray*}
\end{pro}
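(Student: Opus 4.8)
The plan is to estimate the difference $w = u - v$ directly using the transport-equation theory recalled in the Appendix. Subtracting the two copies of \eqref{1.3} shows that $w$ solves a linear transport equation
\begin{eqnarray*}
w_t + u\, w_x = -(u-v)\,v_x - \bigl(F(u)-F(v)\bigr) =: g,
\end{eqnarray*}
with initial datum $w_0 = u_0 - v_0$, where I have split the convection term as $uu_x - vv_x = u w_x + w\, v_x$. Since $u \in L^\infty([0,T);B_{p,r}^s)$ and $s > 1 + 1/p$, the drift $u$ has the Lipschitz regularity $B_{p,r}^s \hookrightarrow B_{\infty,1}^1$ needed to apply the $B_{p,r}^{s-1}$ a priori estimate for transport equations (the standard Besov transport estimate, with the index $s-1$ still in the admissible range because $s-1 > \max\{1/p, 1/2\} \ge 0$). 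That estimate yields, for $t\in[0,T]$,
\begin{eqnarray*}
\|w(t)\|_{B_{p,r}^{s-1}} \le \Bigl(\|w_0\|_{B_{p,r}^{s-1}} + \int_0^t \|g(\tau)\|_{B_{p,r}^{s-1}}\,d\tau\Bigr)\exp\Bigl\{C\int_0^t \|\partial_x u(\tau)\|_{B_{p,r}^{s-1}\cap L^\infty}\,d\tau\Bigr\},
\end{eqnarray*}
and $\|\partial_x u\|_{B_{p,r}^{s-1}\cap L^\infty} \lesssim \|u\|_{B_{p,r}^s}$ by the embedding, so the exponential factor is controlled by $\exp\{C\int_0^t \|u\|_{B_{p,r}^s}\,d\tau\}$.

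The remaining task is to bound $\|g(\tau)\|_{B_{p,r}^{s-1}}$ by $C\bigl(\|u\|_{B_{p,r}^s}+\|v\|_{B_{p,r}^s}\bigr)\|w\|_{B_{p,r}^{s-1}}$; then Gronwall finishes the argument. For the term $w\, v_x$ I use that $B_{p,r}^{s-1}$ is an algebra (since $s-1 > 1/p$ when $s>1+1/p$, and $s-1\ge 1/2$ when $s>3/2$ handles the endpoint-type cases), together with $\|v_x\|_{B_{p,r}^{s-1}} \lesssim \|v\|_{B_{p,r}^s}$. For $F(u)-F(v)$ I treat the three pieces $F_1,F_2,F_3$ separately: each is (a zeroth- or first-order Fourier multiplier applied to) a quadratic expression in $u$ and $u_x$, so $F_i(u)-F_i(v)$ is a sum of terms of the form $(\text{factor in }u\text{ or }v)\times(\text{difference }w\text{ or }w_x)$. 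Since $P(D)$ and $\partial_x P(D)$ map $B_{p,r}^{s-2}\to B_{p,r}^{s-1}$ (indeed $\partial_x P(D)$ gains one derivative net over $P(D)$ which costs one, for a net loss of one order relative to the argument, but the argument $\frac12 u^2 + u_x^2$ already lives in $B_{p,r}^{s-1}$, so $F_2,F_3 \in B_{p,r}^{s-1}$), the products $u^2 - v^2 = (u+v)w$ and $u_x^2 - v_x^2 = (u_x+v_x)w_x$ are estimated by the algebra property of $B_{p,r}^{s-1}$: the first lands in $B_{p,r}^{s-1}$ with norm $\lesssim (\|u\|_{B_{p,r}^{s-1}}+\|v\|_{B_{p,r}^{s-1}})\|w\|_{B_{p,r}^{s-1}}$, and the second has the shape $(\partial_x u + \partial_x v)\,\partial_x w$, which I bound in $B_{p,r}^{s-2}$ by $\lesssim (\|u\|_{B_{p,r}^{s-1}}+\|v\|_{B_{p,r}^{s-1}})\|w\|_{B_{p,r}^{s-1}}$ using the product law $B_{p,r}^{s-1}\times B_{p,r}^{s-2}\to B_{p,r}^{s-2}$ valid for $s-1>1/p$; applying $P(D)$ or $\partial_x P(D)$ then returns to $B_{p,r}^{s-1}$. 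Collecting everything gives the claimed differential inequality for $\|w(t)\|_{B_{p,r}^{s-1}}$.

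The main obstacle — though it is technical rather than conceptual — is keeping all the Besov product and composition estimates uniformly valid across the full parameter range $1\le p,r\le\infty$, $s>\max\{1+1/p,3/2\}$, in particular at the borderline indices where $B_{p,r}^{s-1}$ barely fails or barely satisfies the algebra property; this is exactly why the hypothesis involves the maximum of $1+1/p$ and $3/2$ rather than just $1+1/p$, and the case analysis has to respect that. Once the product laws are pinned down, the transport estimate plus Gronwall's lemma give the stated bound
\begin{eqnarray*}
\|u(t)-v(t)\|_{B_{p,r}^{s-1}} \le \|u_0-v_0\|_{B_{p,r}^{s-1}}\exp\Bigl\{C\int_0^t \|u(\tau)\|_{B_{p,r}^{s}}+\|v(\tau)\|_{B_{p,r}^{s}}\,d\tau\Bigr\}
\end{eqnarray*}
immediately.
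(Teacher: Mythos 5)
Your argument is essentially the paper's own proof: write the difference $w=u-v$ as a linear transport equation (the paper transports by $v$ with source $-w\,\partial_x u$, you transport by $u$ with source $-w\,v_x$, a purely cosmetic symmetric choice), apply the Besov transport estimate of Lemma \ref{T} at regularity $s-1$, bound the source by $C(\|u\|_{B_{p,r}^{s}}+\|v\|_{B_{p,r}^{s}})\|w\|_{B_{p,r}^{s-1}}$ via the explicit factorization of $F(u)-F(v)$ together with the product/Moser estimates, and conclude with Gronwall. This matches the paper's route, and your treatment of the borderline product laws is at the same level of detail as the original.
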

\begin{proof}
Let $\omega=u-v$, then we have $\omega$ satisfies
\begin{eqnarray*}
\left\{
  \begin{array}{ll}
    \omega_t+v\partial_x\omega=-\omega\partial_xu-(F(u)-F(v)), \\
 \omega_0=u_0-v_0.
  \end{array}
\right.
\end{eqnarray*}
By using Lemma \ref{T}, we have
\begin{eqnarray}\label{3.1}
&&\|\omega(t)\|_{B_{p,r}^{s-1}}\nonumber\\
&\leq& e^{C\int_{0}^{t}\|\partial_xv(\tau)\|_{B_{p,r}^{s-2}}d\tau}
\times\left( \|\omega_0\|_{B_{p,r}^{s-1}}
+C\int_{0}^{t}e^{-C\int_{0}^{\tau}\|\partial_xv(s)\|_{B_{p,r}^{s-2}}ds}
\|\omega\partial_xu+(F(u)-F(v))\|_{B_{p,r}^{s-1}}d\tau\right).\nonumber\\
\end{eqnarray}
By using the imbedding  $B_{p,r}^{s-1}\hookrightarrow L^\infty$ with $s>1+\frac{1}{p}$, we have
\begin{eqnarray*}
\|\omega\partial_xu\|_{B_{p,r}^{s-1}}\leq
\|\omega\|_{B_{p,r}^{s-1}}\|u\|_{B_{p,r}^{s}}.
\end{eqnarray*}
Direct calculation we obtain
\begin{eqnarray*}
F(u)-F(v)=\frac{1}{2}\omega(u+v)+P(D)\left(\frac{1}{2}\omega(u+v)+\omega_x(u_x+v_x)\right)
+\partial_xP(D)\left(\frac{1}{2}\omega(u+v)+\omega_x(u_x+v_x)\right),
\end{eqnarray*}
then, for $s>\max \left\{1+\frac{1}{p},\frac{3}{2}\right\}$, we have
\begin{eqnarray*}
\|F(u)-F(v)\|_{B_{p,r}^{s-1}}\leq
C\|\omega\|_{B_{p,r}^{s-1}}(\|u\|_{B_{p,r}^{s}}+\|v\|_{B_{p,r}^{s}}),
\end{eqnarray*}
Hence,
\begin{eqnarray}\label{3.2}
\|\omega\partial_xu+(F(u)-F(v))\|_{B_{p,r}^{s-1}}\leq C\|\omega\|_{B_{p,r}^{s-1}}(\|u\|_{B_{p,r}^{s}}+\|v\|_{B_{p,r}^{s}}).
\end{eqnarray}
Substituting \eqref{3.2} into \eqref{3.1}, we find that
\begin{eqnarray*}
&& \exp\{{-C\int_{0}^{t}\|v(\tau)\|_{B_{p,r}^{s-1}}d\tau}\}\|\omega(t)\|_{B_{p,r}^{s-1}}\\
&\leq &\|\omega_0\|_{B_{p,r}^{s-1}}
+C\int_{0}^{t}\exp\{{-C\int_{0}^{\tau}\|v(\tau_1)\|_{B_{p,r}^{s-1}}d\tau_1}\}
\|\omega(\tau)\|_{B_{p,r}^{s-1}}(\|u(\tau)\|_{B_{p,r}^{s}}+\|v(\tau)\|_{B_{p,r}^{s}})d\tau.
\end{eqnarray*}
The proof of Proposition 2.1 is completed by using the Gronwall inequality.
\end{proof}

Motivated by the local existence theorem for Camassa-Holm equation in \cite{Dan2001}, we would like to use the classical Friedrichs
regularization method to construct the approximate solutions to the Cauchy problem \eqref{1.1}. The following lemma establishes the existence and compactness of approximate solutions which will be used in the proof of Theorem \ref{LSS}.
\begin{lem}\label{APS}
Let $u_0, p, r,s$ be the same as the statement of Theorem 2.1 and
$u^{(0)}= 0$. There exists a sequence of smooth functions $u^{(n)}\in C(\mathbb{R}^+;B_{p,r}^{\infty})$
solving
\begin{eqnarray}\label{3.3}
\left\{
  \begin{array}{ll}
  &u^{(n+1)}_t+u^{(n)}u^{(n+1)}_x=-F(u^{(n)}),\\
&u^{(n+1)}(0,x)=S_{n+1}u_0,
  \end{array}
\right.
\end{eqnarray}
where $F(u)$ is defined in \eqref{Fu} and $S_{n+1}$ is the low frequency cut-off operator defined in Proposition \ref{DB} . Moreover, there is a positive $T$ such that the solutions satisfying the following properties:
\begin{center}
(i) $(u^{(n)})_{n\in \mathbb{N}}$ is uniformly bounded in $E_{p,r}^s(T)$,~~~~~~~~~~\\
(ii) $(u^{(n)})_{n\in \mathbb{N}}$ is a Cauchy sequence in $C([0, T];B_{p,r}^{s-1})$.
\end{center}
\end{lem}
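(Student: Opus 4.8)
The plan is to follow the classical Friedrichs regularization scheme as in Danchin's treatment of Camassa--Holm. First I would note that the linear transport equation in \eqref{3.3} for $u^{(n+1)}$, given $u^{(n)}\in C(\mathbb{R}^+;B_{p,r}^\infty)$ and the smooth, spectrally localized data $S_{n+1}u_0$, has a unique global smooth solution $u^{(n+1)}\in C(\mathbb{R}^+;B_{p,r}^\infty)$; this follows by a straightforward induction using the existence part of the transport theory (Lemma \ref{T}) applied at every regularity level, together with the fact that $F$ maps $B_{p,r}^\infty$ to itself (since $F_1=\tfrac12u^2$ is a product, and $F_2,F_3$ are products composed with the smoothing operators $P(D),\partial_xP(D)$, all of which preserve $B_{p,r}^\infty$ by the product estimates in the Appendix). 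This settles the existence of the sequence; the content of the lemma is the two uniform bounds (i) and (ii).

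For (i), I would establish a uniform lifespan $T>0$ on which $\|u^{(n)}(t)\|_{B_{p,r}^s}$ stays bounded by a fixed constant. The mechanism is the a priori estimate for the transport equation (Lemma \ref{T}) applied with index $s$: writing $U^{(n)}(t)=\|u^{(n)}(t)\|_{B_{p,r}^s}$, one gets
\begin{eqnarray*}
U^{(n+1)}(t)\le e^{C\int_0^t U^{(n)}(\tau)\,d\tau}\Big(U^{(n+1)}(0)+\int_0^t e^{-C\int_0^\tau U^{(n)}(s)\,ds}\|F(u^{(n)}(\tau))\|_{B_{p,r}^s}\,d\tau\Big),
\end{eqnarray*}
where the commutator/transport estimate uses $s>1+\tfrac1p$ so that $B_{p,r}^{s-1}\hookrightarrow L^\infty$ controls $\partial_x u^{(n)}$. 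The nonlinear term is handled by the product estimate in Besov spaces: $\|F(u)\|_{B_{p,r}^s}\le C\|u\|_{B_{p,r}^s}^2$, using again $s>\max\{1+\tfrac1p,\tfrac32\}$ (the index $\tfrac32$ is what makes $\|u_x^2\|_{B_{p,r}^{s-1}}\lesssim\|u\|_{B_{p,r}^s}^2$ work after one derivative is absorbed by $P(D)$ or $\partial_xP(D)$). Since $\|S_{n+1}u_0\|_{B_{p,r}^s}\le\|u_0\|_{B_{p,r}^s}=:U_0$, one can then fix a constant, say $M=2CU_0$ or $M=4U_0$, and show by induction on $n$ that if $T$ is chosen small enough (depending only on $U_0$, $C$) then $U^{(n)}(t)\le M$ for all $t\in[0,T]$ and all $n$. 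This is the standard ``stay in the ball'' argument; one extracts $T$ from a scalar ODE comparison, e.g. $T\sim 1/(CM)$. The uniform bound in $C^1([0,T];B_{p,r}^{s-1})$ (resp. $Lip$ when $r=\infty$) then follows directly from the equation \eqref{3.3}: $u^{(n+1)}_t=-u^{(n)}u^{(n+1)}_x-F(u^{(n)})$, and the right side is bounded in $B_{p,r}^{s-1}$ by the product estimate and the bound just obtained.

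For (ii), I would estimate the difference $w^{(n)}:=u^{(n+1)}-u^{(n)}$, which satisfies a transport equation of the same type as in the proof of Proposition \ref{pro1}:
\begin{eqnarray*}
w^{(n)}_t+u^{(n)}\partial_x w^{(n)}=-(u^{(n)}-u^{(n-1)})\partial_x u^{(n)}-\big(F(u^{(n)})-F(u^{(n-1)})\big),\qquad w^{(n)}(0)=S_{n+1}u_0-S_n u_0.
\end{eqnarray*}
Applying Lemma \ref{T} at index $s-1$ and the difference estimates for the products and for $F$ (exactly as in the displayed identity for $F(u)-F(v)$ in the proof of Proposition \ref{pro1}, with $\omega$ replaced by $w^{(n-1)}$), together with the uniform bound from (i), yields
\begin{eqnarray*}
\|w^{(n)}(t)\|_{B_{p,r}^{s-1}}\le C\Big(\|S_{n+1}u_0-S_nu_0\|_{B_{p,r}^{s-1}}+\int_0^t\|w^{(n-1)}(\tau)\|_{B_{p,r}^{s-1}}\,d\tau\Big)e^{CMt}
\end{eqnarray*}
on $[0,T]$. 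Since $\|S_{n+1}u_0-S_nu_0\|_{B_{p,r}^{s-1}}\le C2^{-n}\|u_0\|_{B_{p,r}^s}$ (a property of the low-frequency cutoff, from Proposition \ref{DB}), an induction on $n$ gives $\|w^{(n)}(t)\|_{B_{p,r}^{s-1}}\le C2^{-n}$ uniformly on $[0,T]$, which is summable, hence $(u^{(n)})$ is Cauchy in $C([0,T];B_{p,r}^{s-1})$.

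The main obstacle, and the step requiring the most care, is the interplay in (i) between choosing the uniform time $T$ and the inductive closure of the bound: one must verify that the constant $M$ can be fixed \emph{before} $T$, so that the smallness of $T$ depends only on $\|u_0\|_{B_{p,r}^s}$ and the absolute constant $C$ from the Besov product and transport estimates, and not on $n$. A secondary point is ensuring the product estimate $\|F(u)\|_{B_{p,r}^s}\lesssim\|u\|_{B_{p,r}^s}^2$ is valid across the full range $1\le p,r\le\infty$ under the stated condition on $s$ --- this is exactly where the hypothesis $s>\max\{1+1/p,3/2\}$ is used, and it is the reason the threshold is not simply $s>1+1/p$. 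Everything else is a routine transcription of the transport-equation machinery recalled in the Appendix.
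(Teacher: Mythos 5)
Your proposal is correct and follows essentially the same route as the paper: Friedrichs regularization, the transport estimate of Lemma \ref{T} combined with the Moser-type bound $\|F(u)\|_{B_{p,r}^{s}}\lesssim\|u\|_{B_{p,r}^{s}}^{2}$ to close a uniform bound on a common interval $[0,T]$ with $T\sim\|u_0\|_{B_{p,r}^{s}}^{-1}$, and then difference estimates at level $s-1$ using $\|S_{n+1}u_0-S_nu_0\|_{B_{p,r}^{s-1}}\lesssim 2^{-n}\|u_0\|_{B_{p,r}^{s}}$ for the Cauchy property. The only cosmetic deviations are that the paper propagates the explicit bound $\|u_0\|_{B_{p,r}^{s}}/(1-2C\|u_0\|_{B_{p,r}^{s}}t)$ rather than a fixed ball $M$, and estimates $u^{(n+m+1)}-u^{(n+1)}$ directly (obtaining an exponential-plus-factorial bound valid for the same $T$) instead of your consecutive-difference telescoping, which as stated needs either that more careful induction or a slightly smaller $T$ to get the summable $2^{-n}$ bound.
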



\begin{proof}
Since all the data $S_{n+1} u_0$ belong to $B_{p,r}^{\infty}$, from Lemma \ref{EU}, we obtain that for all $n\in\mathbb{N}$, equation \eqref{3.3} has a global solution $u_{n+1}$ belonging
to $C(\mathbb{R}^+;B_{p,r}^{\infty})$.
Making use of Lemma \ref{T}, we have
\begin{eqnarray}\label{3.4}
\|u^{(n+1)}(t)\|_{B_{p,r}^{s}}\leq e^{C\int_{0}^{t}\|u^{(n)}(\tau)\|_{B_{p,r}^{s}}d\tau}\times
\left( \|u_0^{(n+1)}\|_{B_{p,r}^{s}}+C\int_{0}^{t}e^{-C\int_{0}^{\tau}\|u^{(n)}(s)\|_{B_{p,r}^{s}}ds}
\|F(u^{(n)})\|_{B_{p,r}^{s}}d\tau\right).\nonumber\\
\end{eqnarray}
By Lemma \ref{Moser} and $B_{p,r}^{s-1}\hookrightarrow L^\infty$ with $s>1+\frac{1}{p}$, we estimate $\|F(u^{(n)})\|_{B_{p,r}^{s}}$ as
\begin{eqnarray}\label{3.5}
\|F(u^{(n)})\|_{B_{p,r}^{s}}&\leq& \|F_1(u^{(n)})\|_{B_{p,r}^{s}}+\|F_2(u^{(n)})\|_{B_{p,r}^{s}}+\|F_3(u^{(n)})\|_{B_{p,r}^{s}}\nonumber\\
&\leq& (\|u^{(n)}\|_{L^\infty}+\|u^{(n)}_x\|_{L^\infty})\|u^{(n)}\|_{B_{p,r}^{s}}\leq\|u^{(n)}\|_{B_{p,r}^{s}}^2.
\end{eqnarray}
Substituting \eqref{3.5} into \eqref{3.4}, we have
\begin{eqnarray*}
\|u^{(n+1)}(t)\|_{B_{p,r}^{s}}\leq e^{C\int_{0}^{t}\|u^{(n)}(\tau)\|_{B_{p,r}^{s}}d\tau}\times
\left( \|u_0\|_{B_{p,r}^{s}}+C\int_{0}^{t}e^{-C\int_{0}^{\tau}\|u^{(n)}(s)\|_{B_{p,r}^{s}}ds}
\|u^{(n)}\|_{B_{p,r}^{s}}^2d\tau\right).
\end{eqnarray*}
Let $T>0$ such that $2C \|u_0\|_{B_{p,r}^{s}}T<1$, we claim that for any $n\in \mathbb{N}$ and $t\in [0,T]$,
\begin{eqnarray}\label{3.7}
\|u^{(n)}\|_{B_{p,r}^{s}}\leq\frac{\|u_0\|_{B_{p,r}^{s}}}{1-2C\|u_0\|_{B_{p,r}^{s}}t}.
\end{eqnarray}
We use mathematical induction method to prove our claim. $u^{(0)}$ satisfies \eqref{3.7} naturally.
Assume \eqref{3.7} is true for $n$. We now prove that it also holds true for $n+1$.
\begin{eqnarray*}
\|u^{(n+1)}(t)\|_{B_{p,r}^{s}}&\leq& \frac{1}{\left(1-2C\|u_0\|_{B_{p,r}^{s}}t\right)^{1/2}}\times
\left( \|u_0\|_{B_{p,r}^{s}}+C\int_{0}^{t}\frac{\|u_0\|_{B_{p,r}^{s}}^2}
{\left(1-2C\|u_0\|_{B_{p,r}^{s}}\tau\right)^{3/2}}d\tau\right)\nonumber\\
&\leq& \frac{1}{\left(1-2C\|u_0\|_{B_{p,r}^{s}}t\right)^{1/2}}\times
\left( \|u_0\|_{B_{p,r}^{s}}+\int_{0}^{t}\frac{\|u_0\|_{B_{p,r}^{s}}}
{2\left(1-2C\|u_0\|_{B_{p,r}^{s}}\tau\right)^{3/2}}d\left(1-2C\|u_0\|_{B_{p,r}^{s}}\tau\right)\right)\nonumber\\
&\leq& \frac{ \|u_0\|_{B_{p,r}^{s}}}{1-2C\|u_0\|_{B_{p,r}^{s}}t}.
\end{eqnarray*}
Therefore, \eqref{3.7} is true and $(u^{(n)})_{n\in \mathbb{N}}$ is uniformly bounded in $C([0, T];B_{p,r}^{s})$.
It is easy to check that $u^{(n)}u^{(n)}_x$ is uniformly bounded in $C([0, T];B_{p,r}^{s-1})$ and $F(u^{(n)})$ is uniformly bounded in $C([0, T];B_{p,r}^{s-1})$. We conclude that the sequence $(u^{(n)})_{n\in \mathbb{N}}$ is uniformly bounded in
$C([0, T];B_{p,r}^{s})\cap C^1([0, T];B_{p,r}^{s-1})$.

Now, we turn to show $(u^{(n)})_{n\in \mathbb{N}}$ is a Cauchy sequence in $C([0, T];B_{p,r}^{s-1})$.
For $m,n\in \mathbb{N}$, from \eqref{3.3}, we have
\begin{eqnarray}\label{3.9}
\partial_t(u^{(n+m+1)}-u^{(n+1)})+u^{(n)}\partial_x(u^{(n+m+1)}-u^{(n+1)})
=-(u^{(n+m)}-u^{(n)})\partial_xu^{(n+m+1)}-(F(u^{n+m})-F(u^{n})).\nonumber\\
\end{eqnarray}
Making use of Lemma \ref{T}, we have
\begin{eqnarray*}
&&\|(u^{(n+m+1)}-u^{(n+1)})(t)\|_{B_{p,r}^{s-1}}\nonumber\\
&\leq& e^{C\int_{0}^{t}\|\partial_xu^{(n)}(\tau)\|_{B_{p,r}^{s-2}}d\tau}
\times\left( \|u_0^{(n+m+1)}-u_0^{(n+1)}\|_{B_{p,r}^{s-1}}
+C\int_{0}^{t}e^{-C\int_{0}^{\tau}\|\partial_xu^{(n)}(s)\|_{B_{p,r}^{s-2}}ds}
\|\tilde{F}\|_{B_{p,r}^{s-1}}d\tau\right),
\end{eqnarray*}
where $\tilde{F}=-(u^{(n+m)}-u^{(n)})\partial_xu^{(n+m+1)}-(F(u^{n+m})-F(u^{n}))$.

Then, we estimate $\|\tilde{F}\|_{B_{p,r}^{s-1}}$ as
\begin{eqnarray*}
\|\tilde{F}\|_{B_{p,r}^{s-1}}&\leq& \|(u^{(n)}-u^{(n+m)})\partial_xu^{(n+m+1)}\|_{B_{p,r}^{s-1}}+
\|F(u^{n+m})-F(u^{n})\|_{B_{p,r}^{s-1}}.
\end{eqnarray*}
By using the embedding $B_{p,r}^{s-1}\hookrightarrow L^\infty$ with $s>1+\frac{1}{p}$, we have
\begin{eqnarray*}
\|(u^{(n+m)}-u^{(n)})\partial_xu^{(n+m+1)}\|_{B_{p,r}^{s-1}}\leq
\|(u^{(n+m)}-u^{(n)})\|_{B_{p,r}^{s-1}}\|u^{(n+m+1)}\|_{B_{p,r}^{s}}.
\end{eqnarray*}
Direct calculation, we obtain
\begin{eqnarray*}
F(u^{n+m})-F(u^{n})&=&\frac{1}{2}(u^{(n+m)}-u^{(n)})(u^{(n+m)}+u^{(n)})\\
&&+P(D)\left(\frac{1}{2}(u^{(n+m)}-u^{(n)})(u^{(n+m)}+u^{(n)})+(u_x^{(n+m)}-u_x^{(n)})(u_x^{(n+m)}+u_x^{(n)})\right)\\
&&+\partial_xP(D)\left(\frac{1}{2}(u^{(n+m)}-u^{(n)})(u^{(n+m)}+u^{(n)})+(u_x^{(n+m)}-u_x^{(n)})(u_x^{(n+m)}+u_x^{(n)})\right).
\end{eqnarray*}
If $s>\max\{1+\frac{1}{p}, \frac{3}{2}\}$, then
\begin{eqnarray*}
\|F(u^{n+m})-F(u^{n})\|_{B_{p,r}^{s-1}}\leq
C\|u^{(n+m)}-u^{(n)}\|_{B_{p,r}^{s-1}}(\|u^{(n+m)}\|_{B_{p,r}^{s}}+\|u^{(n)}\|_{B_{p,r}^{s}}).
\end{eqnarray*}
Hence,
\begin{eqnarray*}
\|\tilde{F}\|_{B_{p,r}^{s-1}}\leq C\|u^{(n+m)}-u^{(n)}\|_{B_{p,r}^{s-1}}(\|u^{(n+m)}\|_{B_{p,r}^{s}}+\|u^{(n)}\|_{B_{p,r}^{s}}+\|u^{(n+m+1)}\|_{B_{p,r}^{s}}).
\end{eqnarray*}
We also have
\begin{eqnarray} \label{IVC}
\notag \|u_0^{(n+m+1)}-u_0^{(n+1)}\|_{B_{p,r}^{s-1}}&=& \|S_{n+m+1}u_0-S_{n+1}u_0\|_{B_{p,r}^{s-1}}
\\ \notag&=&\left\|\sum_{q=n+1}^{n+m}\Delta_q u_0\right\|_{B_{p,r}^{s-1}}
\\ \notag &=&\left(\sum_{k\geq -1}2^{rk(s-1)}\|\Delta_k \sum_{q=n+1}^{n+m}\Delta_q u_0\|_{L^p}^r \right)^{\frac{1}{r}}
\\ \notag&\leq& \left(\sum_{n\leq k\leq m+n+1}2^{rk(s-1)}(\|\Delta_k\Delta_{k-1}u_0\|_{L^p}+\|\Delta_k\Delta_{k+1} u_0\|_{L^p}+\|\Delta_{k}\Delta_k u_0\|_{L^p})^r \right)^{\frac{1}{r}}
\\ \notag&\leq& C\left(2^{-rn}\sum_{n\leq k\leq m+n+1}2^{rks}\|\Delta_k u_0\|_{L^p}^r \right)^{\frac{1}{r}}
\\&\leq& C2^{-n}\|u_0\|_{B_{p,r}^{s}}.
\end{eqnarray}
Since $u^{(n)}\in C([0, T];B_{p,r}^{s})$, 
consequently, for all $t\in[0,T]$,
\begin{eqnarray*}
\|(u^{(n+m+1)}-u^{(n+1)})(t)\|_{B_{p,r}^{s-1}}
\leq C\left( 2^{-n}
+\int_{0}^{t} \|(u^{(n+m)}-u^{(n)})\|_{B_{p,r}^{s-1}}d\tau\right).
\end{eqnarray*}
Arguing by induction with respect to the index $n$, one can easily prove that
\begin{eqnarray*}
\|(u^{(n+m+1)}-u^{(n+1)})(t)\|_{B_{p,r}^{s-1}}
\leq 2^{-n}\left(C\sum_{k=0}^{n}\frac{(2TC)^k}{k!}\right)+C\frac{(TC)^{n+1}}{(n+1)!}.
\end{eqnarray*}
Hence, $(u^{(n)})_{n\in \mathbb{N}}$ is a Cauchy sequence in $C([0, T];B_{p,r}^{s-1})$.
\end{proof}
{\bf\emph{ Proof of Theorem \ref{LSS}}}
From Lemma \ref{APS}, $(u^{(n)})_{n\in \mathbb{N}}$ is a Cauchy sequence in $C([0, T];B_{p,r}^{s-1})$. Then, $(u^{(n)})_{n\in \mathbb{N}}$ converges to a function $u\in C([0, T];B_{p,r}^{s-1})$.
Indeed, by using the locally compact embedding $B_{p,r}^{s}\hookrightarrow B_{p,r}^{s-1}$, the Arzela-Ascoli
theorem and Cantor’s diagonal process, we could extract a subsequence of $(u^{(n)})_{n\in \mathbb{N}}$ (we still use $(u^{(n)})_{n\in \mathbb{N}}$ to denote it.) and there exists $u \in Lip([0, T], B_{p,r}^{s-1})$, such that, for any $\phi \in C_0^{\infty}(\mathbb{R})$
\begin{eqnarray*}
\lim_{n\to \infty}	\|u^{(n)}(t)\phi-u(t)\phi\|_{B_{p,r}^{s-1}}=0,
\end{eqnarray*}
uniformly for $t \in [0, T]$.
 On the other hand, since $(u^{(n)})_{n\in \mathbb{N}}$ is uniformly bounded in $L^\infty([0, T];B_{p,r}^{s})$, it follows from the Fatou property for the Besov spaces(Lemma \ref{Fatou}) that $u\in L^\infty([0, T];B_{p,r}^{s})$.
 By interpolation, we could deduce for any $s'<s$
 \begin{eqnarray*}
\lim_{n\to \infty}\max_{t\in[0, T]}\|u^{(n)}(t)\phi-u(t)\phi\|_{B_{p,r}^{s'}}=0.
 \end{eqnarray*}
 Passing to the limit in \eqref{3.3} in the sense of
 distributions, we see that $u$ is a solution to \eqref{1.3} in the sense of
 distributions, one can easily get that $u_t\in L^\infty([0, T];B_{p,r}^{s-1})$. Lemma \ref{EU} guarantees that $u\in C([0, T];B_{p,r}^{s})$. From the equation \eqref{1.3}, one can also check $u_t\in C([0, T];B_{p,r}^{s-1})$. Thus, $u\in E_{p,r}^{s}(T)$. Uniqueness and continuity with respect to the initial data are an immediate consequence of Proposition \ref{pro1}.

\subsection{Local well-posedness in the critical Besov space}
%
This section will deal with the local existence result in the critical Besov space $B_{2,1}^{3/2}$, which is given in the following theorem.
\begin{thm}
Let $u_0\in B_{2,1}^{3/2}$, there exists a time $T>0$ such that the Cauchy problem (1.1) has a unique
solution $u\in E_{2,1}^{3/2}(T)$. The map $u_0\mapsto u$ is continuous from a neighborhood of $u_0$ in $B_{2,1}^{3/2}$ into
$C([0, T];B_{2,1}^{3/2})\cap C^1([0, T];B_{2,1}^{1/2}).$
\end{thm}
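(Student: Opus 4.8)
The plan is to repeat the proof of Theorem \ref{LSS}, but now tracking the single pair of indices $(s,s-1)=(3/2,1/2)$ and replacing every appeal to the embedding $B^{s-1}_{p,r}\hookrightarrow L^\infty$ (which needs $s>1+1/p$) by the facts that, for $p=2$ and $r=1$, both $B^{3/2}_{2,1}$ and $B^{1/2}_{2,1}$ are Banach algebras, that $B^{1/2}_{2,1}\hookrightarrow L^\infty$, and that the transport estimate of Lemma \ref{T} is still valid at the endpoint regularity $\sigma=3/2=1+\frac12$ precisely because the summability index equals $1$. I would keep the Friedrichs iteration \eqref{3.3} with data $S_{n+1}u_0$, so that each $u^{(n+1)}$ is globally smooth by Lemma \ref{EU}.

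For the uniform bounds I would apply Lemma \ref{T} at $\sigma=3/2$ to \eqref{3.3} and control the source term by the algebra property: with $F=F_1+F_2+F_3$ as in \eqref{Fu}, one has $F_1=\frac12u^2\in B^{3/2}_{2,1}$, while $\frac12u^2+u_x^2\in B^{1/2}_{2,1}$ (algebra), so $F_2=P(D)(\frac12u^2+u_x^2)\in B^{5/2}_{2,1}$ and $F_3=\partial_xP(D)(\frac12u^2+u_x^2)\in B^{3/2}_{2,1}$, whence $\|F(u^{(n)})\|_{B^{3/2}_{2,1}}\le C\|u^{(n)}\|_{B^{3/2}_{2,1}}^2$. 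The rest of the a priori argument — choosing $T$ with $2C\|u_0\|_{B^{3/2}_{2,1}}T<1$ and proving \eqref{3.7} with $s=3/2$ by induction — is unchanged, and yields $(u^{(n)})$ uniformly bounded in $C([0,T];B^{3/2}_{2,1})$, hence in $E^{3/2}_{2,1}(T)$ via the equation.

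Next I would show $(u^{(n)})$ is Cauchy in $C([0,T];B^{1/2}_{2,1})$ by applying Lemma \ref{T} at $\sigma=1/2$ to the difference equation \eqref{3.9}; this is legitimate since $1/2$ lies strictly inside the admissible range and the transporting velocity satisfies $\partial_xu^{(n)}\in L^1([0,T];B^{1/2}_{2,1}\cap L^\infty)$ uniformly in $n$. The source term $\tilde F$ is estimated by the product law $B^{3/2}_{2,1}\cdot B^{1/2}_{2,1}\hookrightarrow B^{1/2}_{2,1}$, and the data difference is bounded in $\ell^1$ by $\|S_{n+m+1}u_0-S_{n+1}u_0\|_{B^{1/2}_{2,1}}\le\sum_{q\ge n+1}2^{q/2}\|\Delta_qu_0\|_{L^2}\le C2^{-n}\|u_0\|_{B^{3/2}_{2,1}}$, the analogue of \eqref{IVC}. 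The induction on $n$ then gives the Cauchy property as in Lemma \ref{APS}. Passing to the limit, $u\in C([0,T];B^{1/2}_{2,1})$ solves \eqref{1.3}; the Fatou property (Lemma \ref{Fatou}) gives $u\in L^\infty([0,T];B^{3/2}_{2,1})$, and applying the transport theory to $u_t+uu_x=-F(u)$ — with $F(u)\in L^\infty([0,T];B^{3/2}_{2,1})$, $\partial_xu\in L^1([0,T];B^{1/2}_{2,1}\cap L^\infty)$, and $r=1<\infty$ — upgrades this to $u\in C([0,T];B^{3/2}_{2,1})$, hence $u_t\in C([0,T];B^{1/2}_{2,1})$ and $u\in E^{3/2}_{2,1}(T)$. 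Uniqueness follows from the analogue of Proposition \ref{pro1} at the level $B^{1/2}_{2,1}$ (transport estimate at $\sigma=1/2$, product law, Gronwall), and continuity of $u_0\mapsto u$ into $C([0,T];B^{3/2}_{2,1})\cap C^1([0,T];B^{1/2}_{2,1})$ from that same stability estimate combined with the uniform high-norm bounds.

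The main obstacle is that at the critical index $s=3/2=1+1/p$ there is no regularity margin left, so the crude $L^\infty$-based product estimates of the subcritical proof are unavailable; everything rests on the borderline facts that hold only for $r=1$ — the algebra structure of $B^{3/2}_{2,1}$ and $B^{1/2}_{2,1}$, the validity of the transport estimate up to the endpoint $\sigma=1+1/p$, and the time-continuity of transport solutions at the top regularity. (The failure of these properties when $r=\infty$ is exactly what drives the ill-posedness in $B^{3/2}_{2,\infty}$ established in the next section.) A secondary delicate point is proving continuity of the solution map into the top space $B^{3/2}_{2,1}$ rather than merely into $B^{1/2}_{2,1}$, which requires combining the $B^{1/2}_{2,1}$ stability estimate with the uniform bounds through a refinement of the convergence argument (e.g. a Bona–Smith type mollification).
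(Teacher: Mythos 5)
Your Step 1 is fine: the uniform bounds indeed survive at the critical index because $B^{3/2}_{2,1}$ is an algebra, $B^{1/2}_{2,1}\hookrightarrow L^\infty$, and Lemma \ref{T} covers the endpoint $s=1+\frac1p$ when $r=1$; this is exactly the paper's Step 1. The gap is in your Cauchy-sequence step and in your uniqueness step. The product law you invoke, $B^{3/2}_{2,1}\cdot B^{1/2}_{2,1}\hookrightarrow B^{1/2}_{2,1}$, does handle the transport term $(u^{(n+m)}-u^{(n)})\partial_x u^{(n+m+1)}$, but it does not handle the difference of the nonlocal terms in \eqref{Fu}. Writing $\omega=u^{(n+m)}-u^{(n)}$, the terms $P(D)\bigl(\omega_x(u_x^{(n+m)}+u_x^{(n)})\bigr)$ and $\partial_xP(D)\bigl(\omega_x(u_x^{(n+m)}+u_x^{(n)})\bigr)$ force you to estimate the product $\omega_x\,(u_x^{(n+m)}+u_x^{(n)})$ in $B^{-1/2}_{2,1}$, where $\omega_x$ lies only in $B^{-1/2}_{2,1}$ and the second factor only in $B^{1/2}_{2,1}$: Lemma \ref{Moser} (2) requires $s_1+s_2>0$, and here $s_1+s_2=0$, so the bound $\|\tilde F\|_{B^{1/2}_{2,1}}\lesssim\|\omega\|_{B^{1/2}_{2,1}}\bigl(\|u^{(n+m)}\|_{B^{3/2}_{2,1}}+\dots\bigr)$ is not available (the zero-regularity remainder in Bony's decomposition only lands in a space with third index $\infty$). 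For the same reason you cannot simply run ``the analogue of Proposition \ref{pro1}'' at $s=3/2$: its proof uses $s>\max\{1+1/p,3/2\}$ strictly. This endpoint failure is precisely what makes the critical case delicate, and it is consistent with the ill-posedness in $B^{3/2}_{2,\infty}$ proved in Section \ref{IP}.

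The paper closes this gap by a different mechanism, which your proposal lacks: the differences are measured in the larger space $B^{1/2}_{2,\infty}$, where the endpoint product estimate does hold and gives $\|\tilde F\|_{B^{1/2}_{2,\infty}}\le C\|\omega\|_{B^{1/2}_{2,1}}\bigl(\|u^{(n+m+1)}\|_{B^{3/2}_{2,1}}+\|u^{(n+m)}\|_{B^{3/2}_{2,1}}+\|u^{(n)}\|_{B^{3/2}_{2,1}}\bigr)$; the logarithmic interpolation inequality of Lemma \ref{ine} then converts the $B^{1/2}_{2,1}$ norm of $\omega$ into $\|\omega\|_{B^{1/2}_{2,\infty}}\ln\bigl(e+2M/\|\omega\|_{B^{1/2}_{2,\infty}}\bigr)$ using the uniform $B^{3/2}$ bounds, and the convergence (and, in Step 3, uniqueness) is obtained from Osgood's lemma (Lemma \ref{Osgood}) with $\mu(r)=r\ln(e+2M/r)$ rather than from Gronwall; finally one returns from $B^{1/2}_{2,\infty}$ to $B^{1/2}_{2,1}$ by the interpolation inequality of Lemma \ref{EMT} (6). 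Without this Osgood/log-interpolation substitute for the failed endpoint product law, your convergence, uniqueness and stability steps do not go through as written.
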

\begin{proof}
Let $(u^{(n)})_{n\in \mathbb{N}}$ be given by \eqref{3.3}. Since the proof is very long, we split it  into  four steps.

{\bf Step  1.}  By replacing $B_{p,r}^{s}$ with $B_{2,1}^{3/2}$ in Lemma \ref{APS}, we could prove that there exits a time $T>0$, such that the sequence $(u^{(n)})_{n\in \mathbb{N}}$ is uniformly bounded in
$C([0, T];B_{2,1}^{3/2})\cap C^1([0, T];B_{2,1}^{1/2})$.

{\bf Step 2.  In this step, we prove $(u^{(n)})_{n\in \mathbb{N}}$ is a Cauchy sequence in $C([0, T];B_{2,1}^{1/2})$.}

From \eqref{3.9} and Lemma \ref{T}, we have
\begin{eqnarray*}
&&\|(u^{(n+m+1)}-u^{(n+1)})(t)\|_{B_{2,\infty}^{1/2}}\nonumber\\&\leq& e^{C\int_{0}^{t}\|u^{(n)}(\tau)\|_{B_{2,\infty}^{3/2}}d\tau}
\times\left( \|u_0^{(n+m+1)}-u_0^{(n+1)}\|_{B_{2,\infty}^{1/2}}
+C\int_{0}^{t}e^{-C\int_{0}^{\tau}\|u^{(n)}(s)\|_{B_{2,\infty}^{3/2}}ds}
\|\tilde{F}\|_{B_{2,\infty}^{1/2}}d\tau\right),
\end{eqnarray*}
where $\tilde{F}=-(u^{(n+m)}-u^{(n)})\partial_xu^{(n+m+1)}-(F(u^{n+m})-F(u^{n}))$. Since $(u^{(n)})_{n\in \mathbb{N}}$ is uniformly bounded in
$C([0, T];B_{2,1}^{3/2})$, there exists an $M>0$ such that $\|u^{(n)}(t)\|_{B_{2,\infty}^{3/2}}\leq M$ for $t\in[0, T]$, which implies
\begin{eqnarray*}
e^{C\int_{0}^{t}\|u^{(n)}(\tau)\|_{B_{2,\infty}^{3/2}}d\tau}\leq e^{CMT}:=M_1.
\end{eqnarray*}
From Lemma \ref{Moser} and \ref{ine}, we obtain
\begin{eqnarray*}
\|\tilde{F}\|_{B_{2,\infty}^{1/2}}&\leq& \|(u^{(n+m)}-u^{(n)})\partial_xu^{(n+m+1)}\|_{B_{2,\infty}^{1/2}}+
\|(F(u^{n+m})-F(u^{n}))\|_{B_{2,\infty}^{1/2}}
\\&\leq& C \|(u^{(n+m)}-u^{(n)})\|_{B_{2,1}^{1/2}}(\|u^{(n+m+1)}\|_{B_{2,1}^{3/2}}+\|u^{(n)}\|_{B_{2,1}^{3/2}}+\|u^{(n+m)}\|_{B_{2,1}^{3/2}})\\
&\leq&  C\|(u^{(n+m)}-u^{(n)})\|_{B_{2,\infty}^{1/2}}\ln\left(e+\frac{\|(u^{(n+m)}-u^{(n)})\|_{B_{2,\infty}^{3/2}}}
{\|(u^{(n+m)}-u^{(n)})\|_{B_{2,\infty}^{1/2}}}\right).
\end{eqnarray*}
Consequently, for all $t\in[0,T]$,
\begin{eqnarray*}
&&\|(u^{(n+m+1)}-u^{(n+1)})(t)\|_{B_{2,\infty}^{1/2}}\\
&\leq& M_1 \left( \|u_0^{(n+m+1)}-u_0^{(n+1)}\|_{B_{2,\infty}^{1/2}}
+C\int_{0}^{t} {\|(u^{(n+m)}-u^{(n)})\|_{B_{2,\infty}^{1/2}}}\ln\left(e+\frac{\|(u^{(n+m)}-u^{(n)})\|_{B_{2,\infty}^{3/2}}}
{\|(u^{(n+m)}-u^{(n)})\|_{B_{2,\infty}^{1/2}}}\right)d\tau\right).
\end{eqnarray*}
By using similar calculation as \eqref{IVC} and Lemma \ref{EMT} ($ B_{2, 1}^{3/2} \hookrightarrow B_{2, \infty }^{3/2}$)
\begin{eqnarray*}
 \|u_0^{(n+m+1)}-u_0^{(n+1)}\|_{B_{2,\infty}^{1/2}}= \|S_{n+m+1}u_0-S_{n+1}u_0\|_{B_{2,\infty}^{1/2}}\leq C2^{-n}\|u_0\|_{B_{2,1}^{3/2}}.
\end{eqnarray*}
Setting $\omega_{n,m}(t)=\|(u^{(n+m)}-u^{(n)})(t)\|_{B_{2,\infty}^{1/2}}$, since $u^{(n)}\in C([0, T];B_{2,1}^{3/2})$ and the function $x\ln(e+\frac{2M}{x})$ is nondecreasing, then we have
\begin{eqnarray*}
\omega_{n+1,m}(t)\leq M_1\left( 2^{-n}
+C\int_{0}^{t} \omega_{n,m}(\tau)\ln\left(e+\frac{2M}
{\omega_{n,m}(\tau)}\right)d\tau\right).
\end{eqnarray*}
Setting $\omega_{n}(t)=\sup_{m\in\mathbb{N}}\omega_{n,m}(t)$ and $\tilde{\omega}(t)=\limsup_{n\rightarrow\infty}\omega_{n}(t)$.
Taking supremum with respect to $m$ in $\mathbb{N}$, $\omega_{n}(t)$ satisfies
\begin{eqnarray*}
\omega_{n+1}(t)\leq C M_1\left( 2^{-n}
+\int_{0}^{t} \omega_{n}(\tau)\ln\left(e+\frac{2M}
{\omega_{n}(\tau)}\right)d\tau\right).
\end{eqnarray*}
Which together with the Fatou–Lebesgue theorem implies
\begin{eqnarray*}
\tilde{\omega}(t)\leq CM_1\int_{0}^{t}\tilde{ \omega}(\tau)\ln\left(e+\frac{2M}
{\tilde{\omega}(\tau)}\right)d\tau.
\end{eqnarray*}
Making use of the Lemma \ref{Osgood} with $a=0$ and $\mu(r)=r\ln\left(e+\frac{2M}{r}\right)$, we have $\tilde{\omega}(t)=0$, for $\forall t\in[0,T]$,
that is to say  $(u^{(n)})_{n\in \mathbb{N}}$ is a Cauchy sequence in $C([0, T];B_{2,\infty}^{1/2})$. By Lemma 2.1, we see
\begin{eqnarray*}
\|(u^{(n+m)}-u^{(n)})\|_{B_{2,1}^{1/2}}&\leq& \|(u^{(n+m)}-u^{(n)})\|_{B_{2,1}^{1}}\\
&\leq& C\|(u^{(n+m)}-u^{(n)})\|_{B_{2,\infty}^{1/2}}^{1/2}\|(u^{(n+m)}-u^{(n)})\|_{B_{2,\infty}^{3/2}}^{1/2}\\
&\leq& C (2M)^{1/2}\|(u^{(n+m)}-u^{(n)})\|_{B_{2,\infty}^{1/2}}^{1/2}.
\end{eqnarray*}
Hence, $(u^{(n)})_{n\in \mathbb{N}}$ is a Cauchy sequence in $C([0, T];B_{2,1}^{1/2})$.

{\bf Step 3. In this step we prove the existence and uniqueness of the solution.}

We now prove the existence of solutions. From step 2, $(u^{(n)})_{n\in \mathbb{N}}$ is a Cauchy sequence in $C([0, T];B_{2,1}^{1/2})$. Then, $(u^{(n)})_{n\in \mathbb{N}}$ converges to a function $u\in C([0, T];B_{2,1}^{1/2})$. By using the same method that is used in the proof of Theorem \ref{LSS} we could obtain $u\in E_{2,1}^{3/2}(T)$.

For the uniqueness, assume that $u_1$ and $u_2$ are two solutions with the same initial datum $u_0$ to the system \eqref{1.1}. If we set $\varpi=u_1-u_2$, then we have $\varpi$ satisfies
\begin{eqnarray*}
\partial_t(\varpi)+u_2\partial_x\varpi
=-\varpi\partial_xu_1-(F(u_1)-F(u_2))
\end{eqnarray*}
with initial data $\varpi(x,0)=0$. Using the similar method that is used in the Step 2, we obtain
\begin{eqnarray*}
\|\varpi(t)\|_{B_{2,1}^{1/2}}
=0.
\end{eqnarray*}
This complete the proof of uniqueness.

{\bf Step 4. }
Continuity with respect to the initial data in $B_{2,1}^{3/2}$ can be obtained as in \cite{Dan2003}, we omit the details in this paper.

\end{proof}

\section{Ill-posedness in $B_{2,\infty}^{3/2}$} \label{IP}
\begin{thm}\label{IPB}
When $u_0\in{B_{2,\infty}^{3/2}}$, the Cauchy problem \eqref{1.1} is ill-posedness. More precisely, there
is a solution $u\in L^\infty([0,\infty];{B_{2,\infty}^{3/2}})$ of \eqref{1.1} such that for any $T$, $\varepsilon>0$, there exists a solution $v\in L^\infty([0,\infty];{B_{2,\infty}^{3/2}})$ with
\begin{eqnarray*}
\|v(0)-u(0)\|_{B_{2,\infty}^{3/2}}\leq \varepsilon \quad \emph{and} \quad \|v(t)-u(t)\|_{L([0,T];B_{2,\infty}^{3/2})}\geq 1.
\end{eqnarray*}
\end{thm}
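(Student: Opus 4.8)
The plan is to show that the data-to-solution map of \eqref{1.1} is discontinuous at a single point of $B_{2,\infty}^{3/2}$, by a perturbation argument organised around the failure, at the endpoint $r=\infty$, of the tools that made the well-posedness work for $r=1$. First I would fix the reference solution $u$. A convenient choice, fitting the theme of the paper, is the single rogue peakon \eqref{roguepeakon} with its singularity pushed outside $[0,\infty)$ — say $A=-1$, $B=0$, i.e. $u(x,t)=(t+1)^{-1}e^{-|x-\ln(t+1)|}$. Since $\|p\,e^{-|x-q|}\|_{B_{2,\infty}^{3/2}}=|p|\,\|e^{-|x|}\|_{B_{2,\infty}^{3/2}}$, one has $\|u(t)\|_{B_{2,\infty}^{3/2}}=C(t+1)^{-1}$, so $u\in L^\infty([0,\infty);B_{2,\infty}^{3/2})$; equally one could take $u\equiv 0$, which is also a solution of \eqref{1.1}. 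The statement then reduces to producing a sequence $(v_n)$ of solutions of \eqref{1.1} with $\|v_n(0)-u(0)\|_{B_{2,\infty}^{3/2}}\to 0$ and $\liminf_{n\to\infty}\sup_{t\in[0,T]}\|v_n(t)-u(t)\|_{B_{2,\infty}^{3/2}}\ge 1$; given $\varepsilon,T$ the required $v$ is then $v_n$ for $n$ large.

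Second, I would prescribe the perturbed data as $v_n(0)=u(0)+\delta_n\phi_n$, with $\delta_n\downarrow 0$ and the profiles $\phi_n$ uniformly bounded in $B_{2,\infty}^{3/2}$ but genuinely critical, $\|\phi_n\|_{W^{1,\infty}}\to\infty$ — concretely, smooth truncations at spatial scale $2^{-L_n}$ of a logarithmic cusp like $(x-x_0)\log\tfrac1{|x-x_0|}$, for which $\|\phi_n\|_{B_{2,\infty}^{3/2}}\sim 1$ while $\|\partial_x\phi_n\|_{L^\infty}\sim L_n$. This class is the right one for structural reasons: the a priori control behind the local well-posedness in $B_{2,1}^{3/2}$ (the transport estimate of Lemma \ref{T}, the Moser estimates of Lemma \ref{Moser}, the logarithmic interpolation of Lemma \ref{ine}, and the Osgood argument of Lemma \ref{Osgood}) all use $B_{2,1}^{3/2}\hookrightarrow W^{1,\infty}$, which fails for $B_{2,\infty}^{3/2}$; the profiles $\phi_n$ realise that failure. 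Taking $\delta_n\to0$ slowly relative to $L_n$ (say $\delta_nL_n\to\infty$), the perturbation $\delta_n\phi_n$ tends to $0$ in $B_{2,\infty}^{3/2}$ while the velocity gradient $\|\partial_x v_n(0)\|_{L^\infty}\gtrsim\delta_n L_n$ blows up.

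Third, I would study the difference $w_n:=v_n-u$, which solves $\partial_t w_n+v_n\partial_x w_n=-w_n\partial_x u-(F(v_n)-F(u))$. The transport estimate controls $\|w_n(t)\|_{B_{2,\infty}^{3/2}}$ only by $\|w_n(0)\|_{B_{2,\infty}^{3/2}}\exp\!\big(C\!\int_0^t\|\partial_x v_n\|_{L^\infty}\big)$, an exponent of order $\delta_n L_n t$, so this bound is already consistent with $\|w_n(t)\|_{B_{2,\infty}^{3/2}}$ reaching size $\gtrsim 1$ on a time interval far shorter than $T$, although $\|w_n(0)\|_{B_{2,\infty}^{3/2}}=\delta_n\to0$. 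To upgrade this consistency to an actual lower bound, I would compare $v_n$ with an explicit approximate solution $v_n^{\rm app}$ — $u$ plus $\delta_n\phi_n$ transported along the flow of the full velocity, plus the leading correction produced by the quadratic terms $uu_x$ and $\partial_xP(D)(\tfrac12 u^2+u_x^2)$ of \eqref{1.3} — show by a direct computation that $\|v_n^{\rm app}(t)-u(t)\|_{B_{2,\infty}^{3/2}}$ genuinely exceeds $1$ before time $T$ uniformly in $n$, and bound the residual $v_n-v_n^{\rm app}$ in the subcritical space $B_{2,\infty}^{1/2}$ by an Osgood inequality (as in the proof of Theorem 2.2) to see it stays $\ll 1$ on the interval where $v_n$ is available.

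The main obstacle is exactly this last step: one must produce a genuine, uniform-in-$n$ lower bound on the inflation of the main (approximate) part while simultaneously keeping the error term negligible, at the regularity $B_{2,\infty}^{3/2}$ where the $W^{1,\infty}$ bound is lost and the Osgood closure of Theorem 2.2 no longer applies. A related technical issue is that the approximating classical solutions $v_n$ (obtained from the local theory with the smooth data $\delta_n\phi_n$) may only exist on shrinking intervals $[0,T_n)$ in $B_{2,1}^{3/2}$, since $\|\delta_n\phi_n\|_{B_{2,1}^{3/2}}\sim\delta_n L_n\to\infty$; one must then either arrange that the inflation already occurs before $T_n$, or continue $v_n$ past $T_n$ as a weak $L^\infty_t B_{2,\infty}^{3/2}$ solution (as the peakon and rogue peakon solutions themselves are, being of exactly critical $B_{2,\infty}^{3/2}$ regularity). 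The remaining work — the product, commutator and interpolation estimates — is routine bookkeeping with the appendix lemmas.
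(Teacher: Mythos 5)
Your proposal takes a genuinely different route from the paper, but it has a real gap at exactly the point you yourself flag as ``the main obstacle'': nothing in the argument actually produces the lower bound $\sup_{t\le T}\|v_n(t)-u(t)\|_{B_{2,\infty}^{3/2}}\ge 1$. The transport estimate with exponential factor $\exp(C\delta_n L_n t)$ is only an upper bound, and being ``consistent with'' inflation is not a proof of it; the claimed ``direct computation'' that the approximate solution $v_n^{\mathrm{app}}$ separates from $u$ by $O(1)$ before time $T$, uniformly in $n$, is precisely the heart of the matter and is not supplied (nor is it clear it holds for truncated logarithmic-cusp data on a fixed time interval $[0,T]$ -- norm inflation arguments of this type typically only give separation on times shrinking with $n$, and turning that into the statement at a fixed $T$ requires additional work). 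Moreover, the theorem requires the competitor $v$ to be a solution in $L^\infty([0,\infty);B_{2,\infty}^{3/2})$; your $v_n$, obtained from the local theory with data of size $\delta_n L_n\to\infty$ in $B_{2,1}^{3/2}$, are only guaranteed on intervals $[0,T_n)$ with $T_n\to 0$, and neither the paper nor its references provide an existence theory that lets you ``continue $v_n$ past $T_n$ as a weak $L^\infty_t B_{2,\infty}^{3/2}$ solution'' for such general data -- the only critical-regularity solutions available are the explicit (multi-)rogue peakons. So both the existence of the perturbed solutions up to time $T$ and the quantitative separation remain unproven.

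The paper's proof avoids both difficulties by working entirely with explicit exact solutions. It takes the one-parameter family of rogue peakons $u_c(x,t)=\frac{c}{ct-1}e^{-|x-\ln|ct-1||}$, which are global distributional solutions in $L^\infty_t B_{2,\infty}^{3/2}$, and compares $u_{c_1}$ and $u_{c_2}$ through the explicit Fourier transform $\hat u_c(\xi,t)=\frac{2c}{ct-1}\frac{e^{-i\xi\ln|ct-1|}}{1+\xi^2}$. At $t=0$ both profiles equal $-c_i e^{-|x|}$, so the initial distance is $O(|c_2-c_1|)$; choosing $c_1=\frac{1}{1+T}$ and $c_2$ so that $\ln|c_2T-1|-\ln|c_1T-1|=2^{-q}\pi$, the peak positions at time $T$ differ by a tiny shift whose phase factor $1-\cos(2^{-q}\pi\xi)$ is of unit size on the dyadic block $|\xi|\sim 2^q$, yielding $\|u_{c_2}(T)-u_{c_1}(T)\|_{B_{2,\infty}^{3/2}}\ge 2$ while the data distance is proportional to $|1-e^{2^{-q}\pi}|\to 0$ as $q\to\infty$. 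The mechanism is that translation acts discontinuously on the peakon profile at critical regularity, and every bound is an explicit computation: no approximate solutions, residual estimates, or Osgood-type closure are needed. To salvage your approach you would have to actually prove the inflation lower bound and the persistence of your perturbed solutions up to time $T$, which is a substantially harder task than the explicit two-peakon comparison the paper uses.
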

\begin{re}
The proof of ill-posedness of CH equation is based on the solitary wave solution $u_c=ce^{-|x-ct|}$. However, such special solution is not a weak solution of \eqref{1.1}. Our proof relies on the single rogue peakon \eqref{roguepeakon} which is a solution in the sense of distribution. Moreover, the result of Theorem \ref{IPB} also indicates that $B_{2,1}^{3/2}$ is a critical space to consider the local well-posedness.
\end{re}
\begin{proof}
Let $A=\frac{1}{c}$ and $B=\ln|c|$ with $c\neq0$ in \eqref{roguepeakon}, we define $u_c(x,t):=\frac{c}{ct-1}e^{-|x-\ln|ct-1||}$, $c\neq0$, $t\geq0$.
Its Fourier transform in $x$ is
$$\hat{u}_c(x,t):=\frac{2c}{ct-1}\frac{e^{-i\xi \ln|ct-1|}}{1+\xi^2}.$$
When $p=2$, and $r\in[0,+\infty]$, the Definition \ref{besov} is equivalent to
\begin{eqnarray*}
\|u\|_{B_{2,r}^s}=\left\{
  \begin{array}{ll}
  \left[\left(\int_{-1}^1(1+\xi^2)^s|\hat{u}(\xi)|^2d\xi\right)^{\frac{r}{2}}
+\sum_{q\in\mathbb{N}}\left(\int_{2^q\leq|\xi|\leq2^{q+1}}(1+\xi^2)^s
|\hat{u}(\xi)|^2d\xi\right)^{\frac{r}{2}}\right]^{\frac{1}{r}}, & r\leq\infty, \\
   \max\left\{\left(\int_{-1}^1(1+\xi^2)^s|\hat{u}(\xi)|^2d\xi\right)^{\frac{1}{2}}, \, \sup_{q\in\mathbb{N}} \left(\int_{2^q\leq|\xi|\leq2^{q+1}}(1+\xi^2)^s
|\hat{u}(\xi)|^2d\xi\right)^{\frac{1}{2}}\right\}, & r=\infty.
  \end{array}
\right.
\end{eqnarray*}
For the initial data, we compute $\|u_{c_2}(0)-u_{c_1}(0)\|_{B_{2,\infty}^{3/2}}$ according to the above Definition.
\begin{eqnarray}\label{uc0}
\|u_{c_2}(0)-u_{c_1}(0)\|_{B_{2,\infty}^{3/2}}^2&=&8(c_2-c_1)^2\max\left\{\int_{0}^1\frac{1}{\sqrt{1+\xi^2}}d\xi
,\sup_{q\in\mathbb{N}} \int_{2^q}^{2^{q+1}}\frac{1}{\sqrt{1+\xi^2}}d\xi\right\}\nonumber\\
&=&8(c_2-c_1)^2\ln(1+\sqrt{2}).
\end{eqnarray}
Direct calculation, we have
\begin{eqnarray}\label{FT2}
\notag &&|\hat{u}_{c_2}(t)-\hat{u}_{c_1}(t)|^2=4\left|\frac{c_2}{c_2t-1}e^{-i\xi \ln|c_2t-1|}-\frac{c_1}{c_1t-1}e^{-i\xi \ln|c_1t-1|}\right|^2\\ \notag
&=&4\left(\frac{c_2}{c_2t-1}e^{-i\xi \ln|c_2t-1|}-\frac{c_1}{c_1t-1}e^{-i\xi \ln|c_1t-1|}\right)
\left(\frac{c_2}{c_2t-1}e^{i\xi \ln|c_2t-1|}-\frac{c_1}{c_1t-1}e^{i\xi \ln|c_1t-1|}\right)
\\&=&4\left(\frac{c_2}{c_2t-1}-\frac{c_1}{c_1t-1}\right)^2+\frac{8c_1c_2}{(c_2t-1)(c_1t-1)}
\left(1-\cos((\ln|c_2t-1|-\ln|c_1t-1|)\xi)\right),
\end{eqnarray}
For $T>0$, we choose $c_1=\frac{1}{1+T}$, $c_2=\frac{1+T-e^{2^{-q}\pi}}{(1+T)T}$ and $q$ large enough, then $c_1c_2>0$ and
\begin{eqnarray}\label{CL}
	\ln|c_2T-1|-\ln|c_1T-1|=2^{-q}\pi.
\end{eqnarray}
Substituting \eqref{FT2} and \eqref{CL} into above definition of $B_{2,\infty}^{3/2}$, we have
\begin{eqnarray*}
\|u_{c_2}(T)-u_{c_1}(T)\|_{B_{2,\infty}^{3/2}}&\geq& \frac{16c_1c_2}{(c_2T-1)(c_1T-1)}\int_{2^q}^{2^{q+1}}\frac{1-\cos(2^{-q}\pi\xi)}{\sqrt{1+|\xi|^2}}d\xi\\
&\geq& \frac{4c_1c_2}{\sqrt{2}(c_2T-1)(c_1T-1)},
\end{eqnarray*}
where we have used
\begin{eqnarray*}
&&\int_{2^q}^{2^{q+1}}\frac{\cos(2^{-q}\pi\xi)}{\sqrt{1+|\xi|^2}}d\xi\\
&=& \int_{\pi}^{2\pi}\frac{\cos t}{\sqrt{\pi^2+|2^q t|^2}}dt
\\
&=& \int_{\pi}^{\frac{3\pi}{2}}\frac{\cos t}{\sqrt{\pi^2+|2^q t|^2}}dt+ \int_{\frac{3\pi}{2}}^{2\pi}\frac{\cos t}{\sqrt{\pi^2+|2^q t|^2}}dt
\\
&=& -\int_{0}^{\frac{\pi}{2}}\frac{\cos \theta}{\sqrt{\pi^2+|2^q (\theta+\pi)|^2}}d\theta+ \int_{0}^{\frac{\pi}{2}}\frac{\cos \theta}{\sqrt{\pi^2+|2^q (2\pi-\theta)|^2}}d\theta
\\
&=& \int_{0}^{\frac{\pi}{2}}[\frac{1}{\sqrt{\pi^2+|2^q (2\pi-\theta)|^2}}-\frac{1}{\sqrt{\pi^2+|2^q (\theta+\pi)|^2}}]\cos \theta d\theta
\\&\leq& 0,
\end{eqnarray*}
and
\begin{eqnarray*}
\int_{2^q}^{2^{q+1}}\frac{1}{\sqrt{1+|\xi|^2}}d\xi \geq \frac{2^q}{\sqrt{1+2^{2(q+1)}}} \geq\frac{1}{4\sqrt{2}}.
\end{eqnarray*}
From \eqref{uc0}, we have
\begin{eqnarray*}
\|u_{c_2}(0)-u_{c_1}(0)\|_{B_{2,\infty}^{3/2}}&=&8\left(\frac{1+T-e^{2^{-q}\pi}}{(1+T)T}-\frac{1}{1+T}\right)^2\ln(1+\sqrt{2})\\
&=&8\left(\frac{1-e^{ 2^{-q}\pi}}{(1+T)T}\right)^2\ln (1+\sqrt{2}),
\end{eqnarray*}
which implies $\|u_{c_2}(0)-u_{c_1}(0)\|_{B_{2,\infty}^{3/2}}$ may be arbitrary small for $q$ large enough, however,
\begin{eqnarray*}
\|u_{c_2}(T)-u_{c_1}(T)\|_{B_{2,\infty}^{3/2}}\geq2.
\end{eqnarray*}

\end{proof}
\section{Global existence and blow-up phenomenon} \label{SGB}
Suppose $u(x,t)$ is a solution to the Cauchy problem \eqref{1.1} in its lifespan. Let us now consider the following differential equation:
\begin{eqnarray}\label{qt}
\left\{
  \begin{array}{ll}
   \frac{d q(x,t)}{dt}=u(q,t), \qquad 0<t<T, x\in \mathbb{R},\\
   q(x,0)=x, \qquad x\in \mathbb{R},
  \end{array}
\right.
\end{eqnarray}
where $T$ is the lifespan of the solution.
Taking derivative \eqref{qt} with respect to $x$, we obtain
$$\frac{dq_t}{dx}=q_{tx}=u_x(q,t)q_x, \qquad t\in(0,T).$$
Therefore
\begin{eqnarray*}\label{qx}
\left\{
  \begin{array}{ll}
   q_x=\exp\left(\int^t_0(u_x(q,s))ds\right), \qquad 0<t<T, x\in \mathbb{R},\\
   q_x(x,0)=1, \qquad x\in \mathbb{R},
  \end{array}
\right.
\end{eqnarray*}
which is always positive before the blow-up time. Therefore, the function $q(x,t)$ is an increasing diffeomorphism of the line before blow-up.
In fact, direct calculation yields
\begin{eqnarray*}
\frac{d}{dt}(e^{q(x,t)}m(q,t)q_x)
=e^{q(x,t)}[m_t(q,t)+m(q,t)u(q,t)+(m(q,t)u(q,t))_x]q_x=0.
\end{eqnarray*}
Hence, the following identity can be proved:
\begin{eqnarray}\label{IE}
e^{q(x,t)}m(q,t)q_x=e^x m_0(x),
\end{eqnarray}
which implies that $m(x, t)$ keeps the sign with respect to the initial datum.

By \eqref{IE}, we have
$$\int_{\mathbb{R}} e^{q(x,t)}m(q,t)q_xdx=\int_{\mathbb{R}} e^{q(x,t)}m(q,t)dq=\int_{\mathbb{R}}e^xm(x,t)dx.$$
It follows a conserved quantity as
$$\int_{\mathbb{R}}e^xm(x,t)dx=\int_{\mathbb{R}}e^xm_0dx.$$

We establish the blow-up scenario for the Cauchy problem \eqref{1.1} in the following Theorem.
\begin{thm}\label{blow-up scenario}
Let $u_0(x)\in H^s(\mathbb{R})$, $s>\frac{3}{2}$ and let $T$ be the maximal existence time of the solution $u(x,t)$ to \eqref{1.1} with the initial data
$u_0(x)$. Then the corresponding solution blows up in finite time if and only if
\begin{eqnarray*}
	\liminf _{t\rightarrow T-}\inf_{x\in\mathbb{R}}\left(u+\frac{1}{2}u_x\right)=-\infty.
\end{eqnarray*}
\end{thm}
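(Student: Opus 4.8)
The plan is to follow the by-now standard blow-up-scenario argument for Camassa--Holm type equations, adapted to the precise nonlinearity in \eqref{1.1}. First I would note the easy direction: if $\liminf_{t\to T-}\inf_{x\in\mathbb R}(u+\tfrac12 u_x)=-\infty$ then, since $u,u_x\in C([0,T);L^\infty)$ when $u\in E^s$, the solution cannot be continued past $T$, so $T$ must be finite and the solution blows up. The substance is the converse: assuming $u+\tfrac12 u_x$ is bounded below on $[0,T)\times\mathbb R$, say $u+\tfrac12 u_x\ge -K$, I want to show $\|u(t)\|_{H^s}$ stays bounded on $[0,T)$, which by the local theory (Theorem \ref{LSS} with $p=r=2$) forces $T=\infty$ or allows continuation, contradicting maximality.

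The key steps, in order. (1) Energy-type estimate at the $m$-level: using the conservation law \eqref{IE}, $e^{q}m(q,t)q_x=e^x m_0(x)$, together with $q_x=\exp(\int_0^t u_x(q,s)\,ds)>0$, one controls $\|m(t)\|_{L^\infty}$ and $\|m(t)\|_{L^1}$-type quantities in terms of $\int_0^t$ of bounds on $u$ and $u_x$ along characteristics; this is where the hypothesis enters, since the transport equation for $m$ reads $m_t+u m_x=-(2u_x+u)m$ (rewriting \eqref{1.1}), so $\frac{d}{dt}m(q(x,t),t)=-(2u_x+u)(q(x,t),t)\,m(q(x,t),t)$, and $2u_x+u=2(u+\tfrac12 u_x)\ge -2K$ gives a one-sided Gronwall bound, hence $|m(q(x,t),t)|\le |m_0(x)|e^{2Kt}$ and in particular $\|m(t)\|_{L^\infty}\le e^{2Kt}\|m_0\|_{L^\infty}$. (2) Recover $u$: since $u=P(D)m=\tfrac12 e^{-|x|}\ast m$ and $u_x=\tfrac12\partial_x(e^{-|x|})\ast m$, the bound on $\|m(t)\|_{L^\infty}\cap\|m(t)\|_{L^1}$ (the $L^1$ bound coming similarly from the conserved $\int e^x m\,dx$ and sign preservation, or from an $L^1$ transport estimate) yields uniform bounds on $\|u(t)\|_{W^{1,\infty}}$. (3) Propagate $H^s$ regularity: apply $\Lambda^s$ to \eqref{1.3}, pair with $\Lambda^s u$, use the Kato--Ponce commutator estimate on $[\Lambda^s,u]\partial_x u$ and the algebra/Moser estimates on $F(u)=F_1+F_2+F_3$ exactly as in the proof of Lemma \ref{APS} (all three terms are bounded by $(\|u\|_{L^\infty}+\|u_x\|_{L^\infty})\|u\|_{H^s}$), to obtain $\frac{d}{dt}\|u(t)\|_{H^s}^2\le C(\|u(t)\|_{L^\infty}+\|u_x(t)\|_{L^\infty})\|u(t)\|_{H^s}^2$. (4) Combine: with Step 2 the factor $\|u\|_{L^\infty}+\|u_x\|_{L^\infty}$ is bounded on $[0,T)$, so Gronwall gives $\sup_{[0,T)}\|u(t)\|_{H^s}<\infty$, contradicting that $T$ is maximal and finite.

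The main obstacle I anticipate is Step 2, converting the one-sided pointwise control of $m$ along characteristics into genuine two-sided $L^\infty$ (and $L^1$) bounds on $m$, and thence on $u$ and $u_x$. The hypothesis only bounds $2u_x+u$ from below, so the characteristic ODE for $m$ only controls $|m|$ from above via $\frac{d}{dt}|m(q,t)|\le (2u_x+u)^-\,|m(q,t)|\le 2K\,|m(q,t)|$ when $m>0$ (and symmetrically), but one must be careful about the interplay with the sign-preservation from \eqref{IE} and the fact that $q(\cdot,t)$ is a diffeomorphism so that $\|m(t)\|_{L^\infty}=\sup_x|m(q(x,t),t)|$; a clean way is to track $\frac{d}{dt}|m(q(x,t),t)|^p\,q_x$ or simply to invoke that $|m(q(x,t),t)|\le|m_0(x)|\exp(\int_0^t|2u_x+u|(q,s)\,ds)$ is circular, so one really does want the one-sided argument. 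One also needs the $L^1$ bound, which I would get from $\|m(t)\|_{L^1}\le\|m_0\|_{L^1}e^{Ct}$ using a similar transport/Gronwall estimate after multiplying by $\operatorname{sign}(m)$ and integrating, using $q_x>0$. Once $\|m(t)\|_{L^1\cap L^\infty}$ is controlled, Young's inequality with $e^{-|x|}\in L^1$ and $\partial_x e^{-|x|}\in L^\infty+ L^1$ closes the bound on $\|u\|_{W^{1,\infty}}$ cleanly, and the rest is routine.
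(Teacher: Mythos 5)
Your easy direction (Sobolev/Morrey embedding: $\|u+\tfrac12 u_x\|_{L^\infty}\le C\|u\|_{H^s}$, so boundedness of the $H^s$ norm forbids $\liminf=-\infty$) is fine and is exactly what the paper does. The gap is in the converse direction, at the crux of your plan, Step (1). First the algebra: expanding $(mu)_x$ in \eqref{1.1} gives $m_t+um_x=-(u+u_x)m$, not $-(2u_x+u)m$; and in any case $2u_x+u\neq 2\left(u+\tfrac12 u_x\right)$ (that would be $2u+u_x$). More seriously, even with the correct transport form, the hypothesis $u+\tfrac12 u_x\ge -K$ gives no one-sided control of the actual damping coefficient $u+u_x$ along characteristics: $u+u_x=2\left(u+\tfrac12 u_x\right)-u\ge -2K-u$, so you would already need a bound on $\|u\|_{L^\infty}$, which is precisely what you are trying to derive — the argument is circular. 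The same issue afflicts your $L^1$ estimate (there the coefficient is $u$ itself). In general, the $L^p$ estimate for $m$ produces the weight $u+\frac{p-1}{p}u_x$, so pointwise/characteristic ($L^\infty$) bounds are matched to $u+u_x$, while the quantity $u+\tfrac12 u_x$ appearing in the theorem is matched exactly to $p=2$; your $L^\infty$-based Steps (1)–(2) therefore cannot be closed from the stated hypothesis.

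This is what the paper exploits: multiplying $m_t+mu+(mu)_x=0$ by $m$ and integrating by parts gives $\tfrac12\frac{d}{dt}\|m\|_{L^2}^2=-\int_{\mathbb{R}}\left(u+\tfrac12 u_x\right)m^2\,dx\le M\|m\|_{L^2}^2$, and Gronwall then bounds $\|m(t)\|_{L^2}$, hence $\|u(t)\|_{H^2}$ (and in particular $\|u\|_{W^{1,\infty}}$), contradicting the maximality of a finite $T$; no characteristics or identity \eqref{IE} are needed for this direction. If you replace your Steps (1)–(2) by this $L^2$ estimate, your Steps (3)–(4) (Kato–Ponce/Moser propagation of the $H^s$ norm using the $W^{1,\infty}$ bound) are a legitimate way to finish for general $s>\tfrac32$ — indeed they address a point the paper leaves implicit when $s\neq 2$, including the case $\tfrac32<s<2$ where $m_0$ need not lie in $L^2$ — but as written the key direction of your proof does not go through.
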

\begin{proof}
	On one hand, suppose for any $t\in (0, T]$,  $\inf_{x\in\mathbb{R}}\left(u+\frac{1}{2}u_x\right) \geq -M$ with $M>0$.
Multiplying \eqref{1.1} by $m$, after integrating by parts, we have
\begin{eqnarray*}
\frac{1}{2}\frac{d}{dt}\|m\|_{L^2}^2&=&-\int_{\mathbb{R}}um^2+(um)_xm)dx\\
&=&-\int_{\mathbb{R}}\left(u+\frac{1}{2}u_x\right)m^2dx.
\end{eqnarray*}
By using Gronwall's inequality, we have
\begin{eqnarray*}
\|m(t)\|_{L^2}^2\leq\|m_0\|_{L^2}^2e^{2MT},
\end{eqnarray*}
then $\|u(t)\|_{H^2}^2\leq\|m(t)\|_{L^2}^2$ is bounded. This contradicts with the fact that $T$ is the maximal time of existence.

On the other hand, the solution does not blow up in $H^s, s>\frac{3}{2}$, that is $\|u\|_{H^s}$ is bounded, by Morrey's inequality, we have
\begin{eqnarray*}
\|u+\frac{1}{2}u_x\|_{L^\infty}\leq C\|u\|_{H^s}<+\infty.
\end{eqnarray*}
This completes the proof of Theorem \ref{blow-up scenario}.
\end{proof}

The operator $(1-\partial_x^2)^{-1} $ can be expressed by its associated Green's function as
$$u(x,t)=(1-\partial_x^2)^{-1} m(t,x)=G*m,\quad G=\frac{1}{2}e^{-|x|}.$$
More precisely,
\begin{eqnarray}\label{u}
u(x,t)=G\ast m(x,t)=\frac{1}{2}e^{-x}\int_{-\infty}^{x}e^{\xi}m(\xi,t)d\xi+\frac{1}{2}e^{x}\int_x^{\infty}e^{-\xi}m(\xi,t)d\xi.
\end{eqnarray}
\begin{eqnarray}\label{ux}
u_x(x,t)=-\frac{1}{2}e^{-x}\int_{-\infty}^{x}e^{\xi}m(\xi,t)d\xi+\frac{1}{2}e^{x}\int_x^{\infty}e^{-\xi}m(\xi,t)d\xi.
\end{eqnarray}
\begin{re}
	We know that for CH type equations, when $m_0(x)$ does not change sign, the solution usually exists globally. But for our system, we find a new a new phenomenon, which is if $m_0(x)\geq(\not\equiv) 0$, then the corresponding
	solution $u(x,t)$ exists globally, while, if $m_0(x)\leq(\not\equiv) 0$, then the corresponding
	solution $u(x,t)$ blows up in finite time.
\end{re}
\begin{thm}
Suppose that $u_0(x)\in H^s(\mathbb{R})$, $s>\frac{3}{2}$ , $m_0(x)\geq(\not\equiv) 0$. Then the corresponding
solution $u(x,t)$ to the Cauchy problem \eqref{1.1} with $u_0$ as the initial datum exists globally.
\end{thm}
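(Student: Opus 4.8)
The plan is to combine the sign-preservation of the momentum density $m$ with the blow-up criterion of Theorem \ref{blow-up scenario}. First I would record that the hypothesis $m_0\ge(\not\equiv)0$ propagates in time: by the identity (\ref{IE}), namely $e^{q(x,t)}m(q,t)q_x=e^x m_0(x)$, together with the fact that $q_x=\exp\big(\int_0^t u_x(q,s)\,ds\big)>0$ before blow-up, one gets $m(\cdot,t)\ge0$ for every $t$ in the lifespan $[0,T)$. (In the low-regularity range $3/2<s<2$ the density $m=u-u_{xx}\in H^{s-2}$ is only a distribution and (\ref{IE}) is formal; the clean way to handle this is to approximate $u_0$ in $H^s$ by smooth data whose momenta are nonnegative, run the local theory of Section \ref{LW} to get smooth solutions for which the above is rigorous, and pass to the limit using continuous dependence on any $[0,T']\subset[0,T)$. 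This is the single technical point that needs care.)

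Next I would feed $m(\cdot,t)\ge0$ into the Green's-function representations (\ref{u}) and (\ref{ux}). Since $G=\tfrac12 e^{-|x|}\ge0$, convolution gives $u(\cdot,t)=G*m(\cdot,t)\ge0$. Adding and subtracting (\ref{u}) and (\ref{ux}) yields
\[
u+u_x=e^{x}\int_x^{\infty}e^{-\xi}m(\xi,t)\,d\xi\ge0,\qquad u-u_x=e^{-x}\int_{-\infty}^{x}e^{\xi}m(\xi,t)\,d\xi\ge0,
\]
so that $|u_x|\le u$ pointwise on $\mathbb{R}\times[0,T)$, and therefore
\[
u+\tfrac12 u_x\ \ge\ u-\tfrac12|u_x|\ \ge\ \tfrac12 u\ \ge\ 0 .
\]

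Consequently $\inf_{x\in\mathbb{R}}\big(u+\tfrac12 u_x\big)(t)\ge 0$ for every $t\in[0,T)$, hence $\liminf_{t\to T-}\inf_{x\in\mathbb{R}}\big(u+\tfrac12 u_x\big)\ge 0>-\infty$. By Theorem \ref{blow-up scenario} the solution cannot blow up at any finite time, so $T=+\infty$ and $u$ exists globally, which is exactly the assertion of the theorem (and which, combined with the blow-up half stated in the abstract under $m_0\le(\not\equiv)0$, exhibits the dichotomy noted in the remark preceding the statement).

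The main obstacle is not the analytic estimate above, which is short and transparent once $m(\cdot,t)\ge0$ is in hand, but rather making the sign-preservation rigorous: for $s\in(3/2,2)$ the transport identity (\ref{IE}) must be recovered through the mollification/continuous-dependence argument sketched above, and one must check the approximating solutions share a common existence interval exhausting $[0,T)$ (using lower semicontinuity of the lifespan together with the uniform-in-$n$ bound on $\|u^{(n)}(t)\|_{H^s}$ on compact subsets of $[0,T)$).
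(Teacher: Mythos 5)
Your proposal is correct and follows essentially the same route as the paper: sign preservation of $m$ via the identity \eqref{IE}, positivity of $u+\tfrac12 u_x$ from the Green's function representations \eqref{u}--\eqref{ux}, and then the blow-up criterion of Theorem \ref{blow-up scenario}. The only differences are cosmetic (you derive $|u_x|\le u$ instead of writing $u+\tfrac12 u_x=\tfrac14 e^{-x}\int_{-\infty}^{x}e^{\xi}m\,d\xi+\tfrac34 e^{x}\int_{x}^{\infty}e^{-\xi}m\,d\xi$ directly) plus your added remark on justifying \eqref{IE} for $3/2<s<2$, a point the paper passes over silently.
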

\begin{proof}
By the blow-up scenario, we only need to show $u+\frac{1}{2}u_x$ has a lower bound.
From the initial data condition $m_0(x)\geq(\not\equiv) 0$ and \eqref{IE}, we have
\begin{eqnarray*}
m(x,t)\geq(\not\equiv) 0.
\end{eqnarray*}
By using \eqref{u} and \eqref{ux}, we have
\begin{eqnarray*}
u+\frac{1}{2}u_x=\frac{1}{4}e^{-x}\int_{-\infty}^{x}e^{\xi}m(\xi,t)d\xi+\frac{3}{4}e^{x}\int_x^{\infty}e^{-\xi}m(\xi,t)d\xi>0.
\end{eqnarray*}
\end{proof}

In the remainder of this section, we would like to establish a sufficient condition for the blow-up of solutions, which is given in the following Theorem.
\begin{thm}\label{BUT}
Suppose that $u_0(x)\in H^s(\mathbb{R})$, $s>\frac{3}{2}$ , $m_0(x)\leq(\not\equiv) 0$. Then, the corresponding solution $u(x, t)$ to the Cauchy problem \eqref{1.1} with $u_0$ as the initial datum blows up in finite time.
\end{thm}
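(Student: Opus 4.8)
\textbf{Proof proposal for Theorem \ref{BUT}.}

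The plan is to track the sign and growth of $u_x$ along the flow $q(x,t)$ defined by \eqref{qt}, exploiting the conservation identity \eqref{IE}, which under the hypothesis $m_0(x)\leq(\not\equiv)0$ forces $m(x,t)\leq(\not\equiv)0$ for all times in the lifespan. First I would derive a Riccati-type differential inequality for the quantity $u_x$ evaluated along a well-chosen characteristic. Differentiating \eqref{1.3} once in $x$ gives an equation for $u_x$ of the form $\partial_t u_x + u\,\partial_x u_x = -\tfrac12 u_x^2 + (\text{lower-order terms involving } P(D))$; evaluating at $x=q(\xi,t)$ and using $\frac{d}{dt}u_x(q,t) = (\partial_t u_x + u u_{xx})(q,t)$, one obtains an ODE $\frac{d}{dt}u_x(q,t) = -\tfrac12 u_x^2(q,t) + R(t)$ where $R$ collects the remainder. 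The key is to show that with $m\leq 0$ everywhere, the remainder $R(t)$ is controlled — ideally nonpositive, or at worst bounded above by a constant depending only on conserved quantities such as $\int_{\mathbb R} e^x m\,dx$ and $\|u\|_{H^1}$-type norms — so that $\frac{d}{dt}u_x(q,t) \leq -\tfrac12 u_x^2(q,t) + C$.

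Second, using the representation formulas \eqref{u} and \eqref{ux}, I would extract the sign information: since $m(\xi,t)\leq 0$, formula \eqref{ux} shows $u_x(x,t) = -\tfrac12 e^{-x}\int_{-\infty}^x e^\xi m\,d\xi + \tfrac12 e^x\int_x^\infty e^{-\xi}m\,d\xi$, and one can locate a point (or follow the characteristic through a point) where $u_x$ is strictly negative — for instance by choosing the initial point $\xi_0$ where some weighted integral of $m_0$ is most negative, or by noting that $u+\tfrac12 u_x < 0$ somewhere (the mirror of the global-existence argument). Combining the strict negativity of $u_x$ at the chosen characteristic at $t=0$ with the Riccati inequality $\frac{d}{dt}u_x(q,t)\leq -\tfrac12 u_x^2(q,t)+C$, a standard comparison argument shows that if the initial value $u_x(q(\xi_0,0),0)$ is sufficiently negative (which one arranges, or alternatively one first shows $u_x$ must become that negative in finite time because $C$ is fixed while any negative drift is amplified), then $u_x(q(\xi_0,t),t)\to -\infty$ in finite time. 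By Theorem \ref{blow-up scenario}, since $u + \tfrac12 u_x \le \|u\|_{L^\infty} + \tfrac12 u_x$ and $\|u\|_{L^\infty}$ stays bounded on any interval where the solution is smooth (controlled again by conserved quantities), $u_x\to-\infty$ forces $\liminf_{t\to T-}\inf_x(u+\tfrac12 u_x)=-\infty$, hence finite-time blow-up.

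The main obstacle I anticipate is controlling the remainder term $R(t)$ in the Riccati inequality. Unlike the classical Camassa-Holm case, here the nonlinearity $mu + (mu)_x$ produces convolution terms $P(D)(\tfrac12 u^2 + u_x^2)$ and $\partial_x P(D)(\tfrac12 u^2 + u_x^2)$ whose derivatives do not obviously have a favorable sign, so one cannot simply drop them. The resolution should come from using the sign condition $m\le 0$ inside the Green's-function representation: writing $u$ and $u_x$ as in \eqref{u}–\eqref{ux} with $m\le 0$ yields $u\le 0$ and pointwise bounds relating $u$, $u_x$, and $u_{xx}=u-m\le u\le 0$, which let one bound $P(D)(\tfrac12 u^2+u_x^2)$ and its $x$-derivative either by $Cu^2$-type quantities that are integrable/conserved, or show they contribute with the right sign. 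A secondary technicality is justifying that all these manipulations — differentiating the equation, evaluating along characteristics, integrating by parts in the energy identity — are legitimate for $H^s$ solutions with $s>3/2$; this is handled by the usual density/approximation argument (work with smooth solutions and pass to the limit, as is standard for Camassa-Holm-type equations), and I would only remark on it rather than carry it out in detail.
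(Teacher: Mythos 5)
Your proposal contains a structural error that the paper's proof is specifically designed to avoid: the claimed Riccati form for $u_x$ does not hold for this equation. Differentiating \eqref{1.3} in $x$, the $-u_x^2$ produced by the transport term is exactly cancelled by the $+u_x^2$ coming from $\partial_x^2 P(D)\left(\frac{1}{2}u^2+u_x^2\right)=P(D)\left(\frac{1}{2}u^2+u_x^2\right)-\left(\frac{1}{2}u^2+u_x^2\right)$ (here the coefficient of $u_x^2$ inside $F_3$ is $1$, not $\frac{1}{2}$ as in classical CH), so along a characteristic one finds $\frac{d}{dt}u_x(q,t)=\frac{1}{2}u^2-uu_x-P(D)\left(\frac{1}{2}u^2+u_x^2\right)-\partial_xP(D)\left(\frac{1}{2}u^2+u_x^2\right)$: there is no $-\frac{1}{2}u_x^2$ term at all, so tracking $u_x$ gives no Riccati mechanism. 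This is precisely why the paper instead tracks $u+\frac{1}{2}u_x$, the very quantity in the blow-up criterion of Theorem \ref{blow-up scenario}. For that combination the two convolutions merge into the one-sided integral $-\frac{3}{2}e^{q}\int_q^\infty e^{-\xi}\left(\frac{1}{2}u^2+u_\xi^2\right)d\xi$; combining this with the elementary bound $e^x\int_x^\infty e^{-\xi}(2u^2+u_\xi^2)d\xi\geq u^2$ and, crucially, with the sign identity \eqref{5.14}, $u^2-u_x^2=\left(\int_{-\infty}^{x}e^{\xi}m\,d\xi\right)\left(\int_x^{\infty}e^{-\xi}m\,d\xi\right)\geq 0$ (a consequence of $m\leq(\not\equiv)0$, which propagates by \eqref{IE}), yields the clean inequality $\frac{d}{dt}\left(u+\frac{1}{2}u_x\right)(q(x_0,t),t)\leq-\frac{1}{2}\left(u+\frac{1}{2}u_x\right)^2(q(x_0,t),t)$ with no remainder. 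Your proposal gestures at "the nonlocal terms contributing with the right sign" via $u\le 0$, but never identifies the needed facts ($u^2\geq u_x^2$ and the one-sided structure of $G+\partial_xG$), which are the actual content of the argument.

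Even granting a relaxed inequality $\frac{d}{dt}u_x\leq-\frac{1}{2}u_x^2+C$, your scheme cannot close under the stated hypothesis. The theorem assumes only $m_0\leq(\not\equiv)0$, with no largeness; a differential inequality with an additive constant forces blow-up only if the tracked quantity drops below $-\sqrt{2C}$, and your fallback remark that "any negative drift is amplified" is false for initial values in $(-\sqrt{2C},0)$ (an upper differential inequality gives no conclusion there). Moreover, your constant $C$ is supposed to be controlled by conserved quantities of $H^1$ type, but unlike CH this equation has no $H^1$ conservation law; the only conserved quantity exhibited is $\int_{\mathbb{R}}e^xm\,dx$, which does not bound $\|u\|_{L^\infty}$, so the final step of your argument (boundedness of $\|u\|_{L^\infty}$ up to the blow-up time) is also unjustified. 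The paper's choice of tracked quantity removes the constant entirely, so any point $x_0$ with $\left(u_0+\frac{1}{2}u_{0x}\right)(x_0)<0$ — which exists immediately from \eqref{111} under $m_0\leq(\not\equiv)0$ — suffices and gives the explicit bound $T\leq-2/\left(u_0+\frac{1}{2}u_{0x}\right)(x_0)$.
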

\begin{proof}
From the condition $m_0(x)\leq(\not\equiv) 0$, we have, there exist a point $x_0\in\mathbb{R}$, such that
\begin{eqnarray}\label{111}
(u_0+\frac{1}{2}u_{0x})(x_0)=\frac{1}{4}e^{-x_0}\int_{-\infty}^{x_0}e^{\xi}m_0(\xi)d\xi+\frac{3}{4}e^{x_0}\int_{x_0}^{\infty}e^{-\xi}m_0(\xi)d\xi<0.
\end{eqnarray}
Rewrite \eqref{1.3} as
\begin{eqnarray}\label{ut}
u_t+uu_x+\frac{1}{2}u^2+G\ast \left(\frac{1}{2}u^2+u_x^2\right)+\partial_x G\ast \left(\frac{1}{2}u^2+u_x^2\right) =0.
\end{eqnarray}
Differentiating \eqref{ut} with respect to $x$, we obtain
\begin{eqnarray*}
u_{xt}+uu_{xx}+u_x^2+uu_x+\partial_x G\ast \left(\frac{1}{2}u^2+u_x^2\right)+\partial_x^2 G\ast \left(\frac{1}{2}u^2+u_x^2\right) =0,
\end{eqnarray*}
it follows
\begin{eqnarray*}
u_{xt}+uu_{xx}-\frac{1}{2}u^2+uu_x+ G\ast \left(\frac{1}{2}u^2+u_x^2\right)+\partial_x G\ast \left(\frac{1}{2}u^2+u_x^2\right) =0.
\end{eqnarray*}
Then, we have $\frac{d\left(u+\frac{1}{2}u_x\right)}{dt}$ at the point $q(x_0,t)$.
\begin{eqnarray}\label{5.10}
&&\frac{d\left(u+\frac{1}{2}u_x\right)(q(x_0,t),t)}{dt}\nonumber\\
&=&u_t(q(x_0,t),t)+uu_x(q(x_0,t),t)+\frac{1}{2}u_{xt}(q(x_0,t),t)
+\frac{1}{2}uu_{xx}(q(x_0,t),t)\nonumber\\
&=&-\frac{1}{4}u^2(q(x_0,t),t)-\frac{1}{2}uu_x(q(x_0,t),t)- \frac{3}{2}G\ast \left(\frac{1}{2}u^2+u_x^2\right)(q(x_0,t),t)-\frac{3}{2}\partial_x G\ast \left(\frac{1}{2}u^2+u_x^2\right)(q(x_0,t),t).\nonumber\\
\end{eqnarray}
Note that,
\begin{eqnarray}\label{5.11}
&&G\ast\left(\frac{1}{2}u^2+u_x^2\right)(q(x_0,t),t)\nonumber\\
&=&\frac{1}{2}e^{-q(x_0,t)}\int_{-\infty}^{q(x_0,t)}e^{\xi}\left(\frac{1}{2}u^2+u_\xi^2\right)d\xi
+\frac{1}{2}e^{q(x_0,t)}\int_{q(x_0,t)}^{\infty}e^{-\xi}\left(\frac{1}{2}u^2+u_\xi^2\right)d\xi
\end{eqnarray}
and
\begin{eqnarray}\label{5.12}
&&\partial_x G\ast\left(\frac{1}{2}u^2+u_x^2\right)(q(x_0,t),t)\nonumber\\
&=&-\frac{1}{2}e^{-q(x_0,t)}\int_{-\infty}^{q(x_0,t)}e^{\xi}\left(\frac{1}{2}u^2+u_\xi^2\right)d\xi
+\frac{1}{2}e^{q(x_0,t)}\int_{q(x_0,t)}^{\infty}e^{-\xi}\left(\frac{1}{2}u^2+u_\xi^2\right)d\xi.
\end{eqnarray}
Substituting \eqref{5.11} and \eqref{5.12} into \eqref{5.10}, we obtain
\begin{eqnarray}\label{5.13}
\frac{d\left(u+\frac{1}{2}u_x\right)(q(x_0,t),t)}{dt}
&=&-\frac{1}{4}u^2(q(x_0,t),t)
-\frac{1}{2}uu_x(q(x_0,t),t)-\frac{3}{2}e^{q(x_0,t)}\int_{q(x_0,t)}^{\infty}e^{-\xi}\left(\frac{1}{2}u^2+u_\xi^2\right)d\xi\nonumber\\
&\leq& -\frac{5}{8}u^2(q(x_0,t),t)
-\frac{1}{2}uu_x(q(x_0,t),t)-\frac{9}{8}e^{q(x_0,t)}\int_{q(x_0,t)}^{\infty}e^{-\xi}u_\xi^2d\xi\nonumber\\
&\leq& -\frac{5}{8}u^2(q(x_0,t),t)-\frac{1}{2}uu_x(q(x_0,t),t),
\end{eqnarray}
where, we have used
\begin{eqnarray*}
\int_{x}^{\infty}e^{-\xi}\left(u^2+u_x^2\right)d\xi\geq \int_{x}^{\infty}e^{-\xi}\left(-2uu_\xi\right)d\xi
\geq e^{-x}u^2(x,t)-\int_{x}^{\infty}e^{-\xi}u^2d\xi
\end{eqnarray*}
yields
\begin{eqnarray*}
 e^{x}\int_{x}^{\infty}e^{-\xi}\left(2u^2+u_{\xi}^2\right)d\xi
\geq u^2(x,t).
\end{eqnarray*}
Under the initial data condition $m_0(x,t)\leq(\not\equiv) 0$ and \eqref{IE}, we have
\begin{eqnarray*}
m(x,t)\leq(\not\equiv) 0,
\end{eqnarray*}
which implies that for any $x\in\mathbb{R}$,
 \begin{eqnarray}\label{5.14}
(u^2-u_x^2)(x,t)=\int_{-\infty}^{x}e^{\xi}m(\xi,t)d\xi\int_x^{\infty}e^{-\xi}m(\xi,t)d\xi>0.
\end{eqnarray}
Then, from \eqref{5.13} and \eqref{5.14}, we have
\begin{eqnarray*}
\frac{d\left(u+\frac{1}{2}u_x\right)(q(x_0,t),t)}{dt}&\leq& -\frac{1}{2}\left(u+\frac{1}{2}u_x\right)^2(q(x_0,t),t)-\frac{1}{8}u^2(q(x_0,t),t)+\frac{1}{8}u_x^2(q(x_0,t),t)\nonumber\\
&\leq &-\frac{1}{2}\left(u+\frac{1}{2}u_x\right)^2(q(x_0,t),t).
\end{eqnarray*}
By \eqref{111} and using standard arguments for ordinary differential equations, it is easy to conclude that there exists a finite time $T$, such that
\begin{eqnarray*}
\lim_{t\rightarrow T}\left(u+\frac{1}{2}u_x\right)(q(x_0,t),t)=-\infty \quad with \quad T\leq \frac{-2}{(u_0+\frac{1}{2}u_{0x})(x_0)}.
\end{eqnarray*}
This completes the proof of Theorem \ref{BUT}.
\end{proof}

\section{Conclusion and open problem}
In this paper, we develop a new type of soliton solution -- called {\bf rogue peakon} in the integrable peakon theory.  The  rogue peakon takes on a rational form at the peak of
a logarithmic function, instead of a regular traveling wave, and is found in the integrable Camassa-Holm (CH) type equation (\ref{1.0}). We also provide multi-rogue peakon solutions with some interesting interactional dynamics as shown in Figures 2 - 6. Moreover, we discuss the local well-posedness of the solution in the Besov space $B_{p,r}^{s}$ with $1\leq p,r\leq\infty$, $s>\max \left\{1+1/p,3/2\right\}$ or $B_{2,1}^{3/2}$, and then prove the ill-posedness of the solution in $B_{2,\infty}^{3/2}$. It is worth to point out that the proof of ill-posedness of relies on the single rogue peakon \eqref{roguepeakon} which is a solution in the sense of distribution. For the classical Camassa-Holm equation, Degasperis-Procesi equation and Holm-Staley $b$-family equation, when $m_0(x)$ does not change sign, the solution usually exists globally. But for our system, we find a new phenomenon, which is if $m_0(x)\geq(\not\equiv) 0$, then the corresponding
	solution $u(x,t)$ exists globally, while if $m_0(x)\leq(\not\equiv) 0$, then the corresponding
	solution $u(x,t)$ blows up in a finite time. But the global rogue peakon stability keeps open.

\section{Acknowledgments}
This work was partially supported by the National Natural Science Foundation of China (Grant No.12071439 and Grant No. 11971475). The authors really very much appreciate the fruitful discussions with Prof. Senyue Lou (NBU), Prof. Hongyu Liu (CUHK), Prof. Jingsong He (SZU), Prof. Ruguang Zhou (JSNU), Prof. Zuonong Zhu (SHJT), Prof. Yufeng Zhang (CUMT), and Prof. Jifeng Chu (SHNU).

$ $

\section*{Appendix}
\appendix
\renewcommand{\thelem}{A.\arabic{lem}}
\renewcommand{\thepro}{A.\arabic{pro}}
\renewcommand{\thedefn}{A.\arabic{defn}}
For consistence of our paper, we recall some basic results on the Littlewood-Paley decomposition theory and
Besov spaces which required in the proof of our main theorems.
\begin{pro}[Littlewood–Paley decomposition]\emph{\cite{Dan2001,Dan03}} \label{DB}
	Let $\mathcal{B}:=\{\xi\in \mathbb{R}^d;|\xi|\leq\frac{4}{3}\}$ and $\mathcal{C}:=\{\xi\in \mathbb{R}^d;\frac{8}{3}\leq|\xi|\leq\frac{8}{3}\}$. Then there exist two smooth radial functions $\chi\in C_c^\infty(\mathcal{B})$ and $\varphi\in C_c^\infty(\mathcal{C})$ valued in $[0, 1]$, such that
	\begin{eqnarray*}
		&&\chi(\xi)+\sum_{q\geq0}\varphi(2^{-q}\xi)=1,\quad \forall \xi\in \mathbb{R}^d,\\
		&& \mathrm{supp} \varphi(2^{-q}\cdot)\cap \mathrm{supp } \varphi(2^{-q'}\cdot)=\varnothing,\,\, if \,\, |q-q'|\geq 2,\\
		&&\mathrm{supp}\chi(\cdot)\cap\mathrm{supp} \varphi(2^{-q}\cdot)=\varnothing,\,\, if \,\, q\geq1,\\
		&&\frac{1}{2}\leq \chi^2(\xi)+\sum_{q\geq0}\varphi^2(2^{-q}\xi)\leq 1,\quad \forall \xi \in \mathbb{R}^d.
	\end{eqnarray*}
	We denote the Fourier transform and its inverse by $\mathcal{F}$ and $ \mathcal{F}^{-1}$. Let $h:=\mathcal{F}^{-1}\varphi$ and $\tilde{h}:=\mathcal{F}^{-1}\chi$. Then for all $f\in \mathscr{S}'(\mathbb{R}^d)$, the dyadic operators $\Delta_q$ and $S_q$ can be defined as follows
	\begin{eqnarray*}
		&&\Delta_qf:=\varphi(2^{-q}D)f=2^{qd}\int_{\mathbb{R}^d}h(2^qy)f(x-y)dy\quad if \quad q\geq0,\\
		&&S_qf:=\chi(2^{-q}D)f=\sum_{-1\leq k\leq q-1}\Delta_kf=2^{qd}\int_{\mathbb{R}^d}\tilde{h}(2^qy)f(x-y)dy,\\
		&&\Delta_{-1}f:=S_0f \quad and \quad \Delta_qf:=0\quad if \quad q\leq-2.
	\end{eqnarray*}
	Hence,
	\begin{eqnarray*}
		f=\sum_{q\in \mathbb{Z}}\Delta_qf
	\end{eqnarray*}
	holds in $\mathscr{S}'(\mathbb{R}^d)$ and is called the Littlewood-Paley decomposition.
\end{pro}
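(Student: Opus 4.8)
The plan is to build everything from a single radial cut-off and let dyadic dilation together with a telescoping identity do the work. First I would fix a smooth radial function $\chi$ that equals $1$ on the ball $\{|\xi|\leq\tfrac34\}$, is supported in $\mathcal{B}=\{|\xi|\leq\tfrac43\}$, is valued in $[0,1]$, and is nonincreasing as a function of $|\xi|$; such a $\chi$ exists by mollifying the indicator of a ball. I would then \emph{define} $\varphi$ by $\varphi(\xi):=\chi(\xi/2)-\chi(\xi)$. This is manifestly smooth and radial, and because $\chi$ is nonincreasing in $|\xi|$ while $|\xi/2|\leq|\xi|$ we get $0\leq\varphi\leq\chi(\xi/2)\leq 1$, so $\varphi$ is valued in $[0,1]$. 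A direct support computation shows $\varphi=0$ both where $|\xi|\leq\tfrac34$ (there both terms equal $1$) and where $|\xi|\geq\tfrac83$ (there both terms vanish), hence $\mathrm{supp}\,\varphi$ is contained in the annulus $\mathcal{C}$ (whose inner radius is $\tfrac34$).

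The partition-of-unity identity is then a telescoping sum. Writing $\varphi(2^{-q}\xi)=\chi(2^{-(q+1)}\xi)-\chi(2^{-q}\xi)$ and summing over $0\leq q\leq N$ gives $\chi(\xi)+\sum_{q=0}^N\varphi(2^{-q}\xi)=\chi(2^{-(N+1)}\xi)$, and as $N\to\infty$ the right-hand side tends to $\chi(0)=1$ for each fixed $\xi$, yielding $\chi(\xi)+\sum_{q\geq0}\varphi(2^{-q}\xi)=1$. The two disjointness claims follow from the explicit support $\mathrm{supp}\,\varphi(2^{-q}\cdot)=\{2^q\cdot\tfrac34\leq|\xi|\leq 2^q\cdot\tfrac83\}$: if $q'\geq q+2$ the inner radius $2^{q'}\tfrac34\geq 2^q\cdot 3$ exceeds the outer radius $2^q\tfrac83$, giving the almost-orthogonality; and for $q\geq 1$ the inner radius $2^q\tfrac34\geq\tfrac32$ exceeds $\tfrac43$, so $\mathrm{supp}\,\chi$ and $\mathrm{supp}\,\varphi(2^{-q}\cdot)$ are disjoint.

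For the square bound I would use the key consequence of the disjointness that, at any fixed $\xi$, at most two of the functions in $\{\chi,\varphi(2^{-q}\cdot)\}_{q\geq0}$ are nonzero, and these are consecutive (no triple overlap occurs because indices differing by $2$ have disjoint supports). Calling the two nonzero values $a,b\geq 0$, the partition identity forces $a+b=1$, and then the elementary estimates $\tfrac12=\tfrac12(a+b)^2\leq a^2+b^2=(a+b)^2-2ab\leq 1$ deliver exactly $\tfrac12\leq\chi^2+\sum_{q\geq0}\varphi^2(2^{-q}\cdot)\leq 1$ (the single-nonzero-term case $a=1,\,b=0$ also lies in $[\tfrac12,1]$).

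Finally, for the decomposition $f=\sum_{q\in\mathbb{Z}}\Delta_q f$ in $\mathscr{S}'$, I would pass to the Fourier side. By the telescoping identity the partial sum is $\Delta_{-1}f+\sum_{q=0}^N\Delta_q f=\chi(2^{-(N+1)}D)f$, whose symbol $\chi(2^{-(N+1)}\xi)$ is bounded by $1$ and converges pointwise to $1$ as $N\to\infty$. Testing against an arbitrary Schwartz function and applying dominated convergence shows $\chi(2^{-(N+1)}\xi)\hat f\to\hat f$ in $\mathscr{S}'$, and Fourier inversion then gives convergence of the partial sums to $f$ in $\mathscr{S}'$. The main subtlety is precisely this last point: the series need not converge absolutely, so one must phrase convergence through the telescoped partial sums and dominated convergence rather than term by term; and one must be careful that the lower bound $\tfrac12$ genuinely exploits the ``at most two overlapping terms'' structure rather than naive positivity.
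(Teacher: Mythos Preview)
The paper does not supply a proof of this proposition at all: it is stated in the Appendix as a standard fact with a citation to \cite{Dan2001,Dan03}, so there is no in-paper argument to compare against. Your construction is the classical one (define $\varphi(\xi)=\chi(\xi/2)-\chi(\xi)$ from a single radial bump, telescope, and read off support and square estimates), and it is correct; in particular your observation that at each $\xi$ at most two terms are nonzero cleanly gives the $\tfrac12$ lower bound. Two cosmetic points: the equality $\mathrm{supp}\,\varphi(2^{-q}\cdot)=\{2^q\tfrac34\leq|\xi|\leq 2^q\tfrac83\}$ should be an inclusion, and the $\mathscr{S}'$-convergence step really needs $\chi(2^{-(N+1)}\cdot)\psi\to\psi$ in the Schwartz topology (all seminorms), which is a touch stronger than a single application of dominated convergence, though still routine.
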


\begin{defn}[Besov spaces]\emph{\cite{Dan2001,Dan03}}\label{besov}
	Let $1\leq p,r\leq\infty$ and $s\in\mathbb{R}$. The nonhomogeneous Besov space
	$B_{p,r}^{s}(\mathbb{R}^d)$ is defined by
	\begin{eqnarray*}
		B_{p,r}^{s}(\mathbb{R}^d):=\{f\in\mathscr{S}'(\mathbb{R}^d);\|f\|_{B_{p,r}^{s}}<\infty \},
	\end{eqnarray*}
	where
	\begin{eqnarray*}
		\|f\|_{B_{p,r}^{s}}
		:=\left\{
		\begin{array}{ll}
			\left(\sum_{q\in \mathbb{Z}}2^{qsr}\|\Delta_q f \|_{L^p}^r\right)^{\frac{1}{r}}, & for \quad r<\infty, \\
			\sup_{q\in \mathbb{Z}}2^{qs}\|\Delta_q f \|_{L^p}, & for \quad r=\infty.
		\end{array}
		\right.
	\end{eqnarray*}
	If $s=\infty$, $B_{p,r}^{\infty}:=\cap_{s\in\mathbb{R}}B_{p,r}^{s}$.
\end{defn}
Then, we present some useful properties of $B_{p,r}^{s}$.
\begin{lem}\emph{\cite{Dan2001,Dan03,Tri}}\label{EMT}
	Let $s\in\mathbb{R}$ and $1\leq p,r,p_j,r_j\leq\infty$, $j=1,2$. Then the following properties
	hold:\\
	\emph{(1)} $B_{p,r}^{s}(\mathbb{R})$ is a Banach space and is continuously embedded in $\mathscr{S}'(\mathbb{R})$.\\
	\emph{(2)} $C_c^\infty(\mathbb{R})$ is dense in $B_{p,r}^{s}(\mathbb{R})\Leftrightarrow 1\leq p,r<\infty$.\\
	\emph{(3)} $B_{p_1,r_1}^{s_1}(\mathbb{R})\hookrightarrow B_{p_2,r_2}^{s_2}(\mathbb{R})$, if $p_1\leq p_2$, $r_1\leq r_2$ and $s_2=s_1-\left(\frac{1}{p_1}-\frac{1}{p_2}\right)$.
	
	 $B_{p,r_1}^{s_1}(\mathbb{R})\hookrightarrow B_{p,r_2}^{s_2}(\mathbb{R})$, if $s_2<s_1$ or $s_2=s_1$, $r_1\leq r_2$.
	
	$B_{p_1,r_1}^{s_1}(\mathbb{R})\hookrightarrow B_{p_2,r_2}^{s_2}(\mathbb{R})$ is locally compact if $s_2\leq s_1$; $r_1\leq r_2$.\\
	\emph{(4)} $\forall s>0$, $B_{p,r}^{s}(\mathbb{R})\cap L^\infty(\mathbb{R})$ is a Banach algebra. Moreover, $B_{p,r}^{s}(\mathbb{R})$ is a Banach algebra$\Leftrightarrow$ $B_{p,r}^{s}(\mathbb{R})\hookrightarrow L^{\infty}(\mathbb{R})$ $\Leftrightarrow$ $s>\frac{1}{p}$(or $s\geq\frac{1}{p}$ and $r=1$).\\
	\emph{(5)} $\forall$ $\theta \in [0,1]$, $s=\theta s_1+(1-\theta)s_2$,
	$$\|f\|_{B_{p,r}^{s}}\leq C \|f\|_{B_{p,r}^{s_1}}^\theta \|f\|_{B_{p,r}^{s_2}}^{1-\theta},\quad \forall f\in B_{p,r}^{s_1}\cap B_{p,r}^{s_2}.$$\\
	\emph{(6)} $\forall$ $\theta \in (0,1)$, $s_1>s_2$, $s=\theta s_1+(1-\theta)s_2$,
	$$\|f\|_{B_{p,1}^{s}}\leq \frac{C(\theta)}{s_1-s_2}\|f\|_{B_{p,\infty}^{s_1}}^\theta \|f\|_{B_{p,\infty}^{s_2}}^{1-\theta},\quad \forall f\in B_{p,\infty}^{s_1}.$$
\end{lem}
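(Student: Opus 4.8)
The plan is to derive every assertion directly from the Littlewood--Paley decomposition of Proposition \ref{DB} and the norm in Definition \ref{besov}, which realizes $\|f\|_{B_{p,r}^{s}}$ as a weighted $\ell^r(\mathbb{Z})$ norm of the sequence $\bigl(2^{qs}\|\Delta_q f\|_{L^p}\bigr)_q$. The two auxiliary tools I would invoke throughout are the Bernstein inequalities (for $f$ with Fourier support in a ball $2^q\mathcal{B}$ or an annulus $2^q\mathcal{C}$ one has $\|f\|_{L^b}\le C2^{qd(1/a-1/b)}\|f\|_{L^a}$ for $a\le b$, and $\|\partial^\alpha f\|_{L^p}\sim 2^{q|\alpha|}\|f\|_{L^p}$) together with Bony's paraproduct decomposition $fg=T_fg+T_gf+R(f,g)$; both are classical and are precisely the ingredients used in \cite{Dan2001,Dan03,Tri}, so the task is really to indicate which tool governs which item.

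For (1), completeness reduces to completeness of the weighted $\ell^r(L^p)$ model: given a Cauchy sequence $(f_n)$, each block $\Delta_q f_n$ is Cauchy in $L^p$ with limit $g_q$, and one checks that $f:=\sum_q g_q$ converges in $\mathscr{S}'$, satisfies $\Delta_q f=g_q$ by the support and overlap properties recorded in Proposition \ref{DB}, and realizes the limit in $B_{p,r}^{s}$. The continuous embedding $B_{p,r}^{s}\hookrightarrow\mathscr{S}'$ then follows by writing $\langle f,\phi\rangle=\sum_q\langle\Delta_q f,\phi\rangle$ for $\phi\in\mathscr{S}$, bounding each pairing by Hölder in $L^p$--$L^{p'}$, and using Bernstein to dominate the $L^{p'}$ norms of the frequency-localized pieces of $\phi$ by Schwartz seminorms times rapidly decaying factors. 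For (2), density of $C_c^\infty$ when $p,r<\infty$ comes from first truncating frequencies ($S_N f\to f$, which uses $r<\infty$) and then truncating and mollifying in space (an $L^p$ approximation, which uses $p<\infty$); the failure for $p=\infty$ or $r=\infty$ is exhibited by elements whose spatial or high-frequency tails cannot be approximated in a supremum norm. Items (3) and (5) are the most mechanical: the gain-of-integrability embedding is immediate from $\|\Delta_q f\|_{L^{p_2}}\le C2^{q(1/p_1-1/p_2)}\|\Delta_q f\|_{L^{p_1}}$ combined with $\ell^{r_1}\hookrightarrow\ell^{r_2}$; the same-index embeddings use monotonicity of the weights and of $\ell^r$; and inequality (5) is just Hölder in $q$ applied to $2^{qs}\|\Delta_q f\|_{L^p}=(2^{qs_1}\|\Delta_q f\|_{L^p})^\theta(2^{qs_2}\|\Delta_q f\|_{L^p})^{1-\theta}$ with exponents $1/\theta,1/(1-\theta)$. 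Local compactness in (3) I would obtain by combining the strict embedding with a Rellich-type argument: restricting a bounded set to a compact spatial window, each frequency band is uniformly bounded and equicontinuous, so Arzelà--Ascoli together with a diagonal extraction yields a subsequence converging in $B_{p_2,r_2}^{s_2}$.

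The two substantive points, and the places I expect the real work, are (4) and (6). For the algebra property in (4) I would insert Bony's decomposition and estimate $T_fg$, $T_gf$ and the remainder $R(f,g)$ separately: the two paraproducts are controlled by bounds of type $\|f\|_{L^\infty}\|g\|_{B_{p,r}^{s}}$, while the remainder requires $s>0$ for summability, yielding $\|fg\|_{B_{p,r}^{s}}\le C(\|f\|_{L^\infty}\|g\|_{B_{p,r}^{s}}+\|g\|_{L^\infty}\|f\|_{B_{p,r}^{s}})$ and hence the algebra structure on $B_{p,r}^{s}\cap L^\infty$; the chain of equivalences then closes because $B_{p,r}^{s}\hookrightarrow L^\infty$ holds exactly when $\sum_q 2^{q/p}\|\Delta_q f\|_{L^p}$ is forced finite by the norm, i.e.\ when $s>1/p$ or $s=1/p$ with $r=1$, and under that embedding intersecting with $L^\infty$ is automatic. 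For (6) the mechanism is a two-sided split of $\|f\|_{B_{p,1}^{s}}=\sum_q 2^{qs}\|\Delta_q f\|_{L^p}$ at a cutoff $q=N$: for $q\le N$ bound each term by $2^{q(s-s_2)}\|f\|_{B_{p,\infty}^{s_2}}$ and sum the geometric series of ratio $2^{s-s_2}>1$, while for $q>N$ bound by $2^{-q(s_1-s)}\|f\|_{B_{p,\infty}^{s_1}}$ and sum the ratio $2^{-(s_1-s)}<1$; optimizing over $N$ balances the two contributions, and it is precisely the geometric factors $(2^{s-s_2}-1)^{-1}$ and $(1-2^{-(s_1-s)})^{-1}$ that, after inserting $s-s_2=\theta(s_1-s_2)$ and $s_1-s=(1-\theta)(s_1-s_2)$, produce the explicit constant $C(\theta)/(s_1-s_2)$. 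The main obstacle is thus bookkeeping rather than conceptual: keeping the paraproduct estimates in (4) uniform in $p,r$, and tracking the sharp dependence of the constant on $s_1-s_2$ in (6); everything else reduces to Bernstein's inequalities and elementary $\ell^r$ manipulations.
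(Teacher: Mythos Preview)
Your proof sketch is correct and follows the standard arguments found in the references \cite{Dan2001,Dan03,Tri}; however, the paper does not actually prove this lemma at all --- it is stated in the Appendix purely as a citation of known results, with no argument given. So there is nothing to compare against: you have supplied a genuine (and accurate) outline of the classical proofs, whereas the paper simply invokes the literature.
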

\begin{lem}\emph{\cite{BCD2011,Dan2001,Dan2003}}\label{T}
	Let $1\leq p,r\leq\infty$ and $s>-\min \left\{\frac{1}{p},1-\frac{1}{p}\right\}$. Assume that
	$f_0\in B_{p,r}^{s}$, $F\in L^1([0,T];B_{p,r}^{s})$ and $\partial_x v\in L^1([0,T];B_{p,r}^{s-1})$ if $s>1+\frac{1}{p}$ or
	$\partial_x v\in L^1([0,T];B_{p,r}^{1/p}\cap L^\infty)$ otherwise. If $f\in L^\infty([0,T];B_{p,r}^{s})\cap C([0,T]; \mathscr{S}')$ solves the following 1D linear transport equation:
	\begin{eqnarray}\label{LTE}
		\left\{
		\begin{array}{ll}
			f_t+vf_x=F,t>0,x\in\mathbb{R},\\
			f(0,x)=f_0, x\in\mathbb{R},
		\end{array}
		\right.
	\end{eqnarray}
	then there exists a constant C depending only on $s,p,r$ such that the following statements hold:\\
	\emph{(1)} For all $t\in[0,T]$,
	\begin{eqnarray*}
		\|f\|_{B_{p,r}^{s}}\leq \|f_0\|_{B_{p,r}^{s}}+\int_{0}^t \|F(\tau)\|_{B_{p,r}^{s}}d\tau+C\int_{0}^t V'(\tau) \|f(\tau)\|_{B_{p,r}^{s}}d\tau,
	\end{eqnarray*}
	and hence,
	\begin{eqnarray*}
		\|f\|_{B_{p,r}^{s}}\leq e^{CV(t)}\left( \|f_0\|_{B_{p,r}^{s}}+\int_{0}^t e^{-C V(\tau)} \|F(\tau)\|_{B_{p,r}^{s}}d\tau\right),
	\end{eqnarray*}
	where
	\begin{eqnarray*}
		V(t)=\left\{
		\begin{array}{ll}
			\int_{0}^t  \|v_x(\tau)\|_{B_{p,r}^{1/p}\cap L^\infty}d\tau,\quad if ~~ s<1+\frac{1}{p}, \\
			\int_{0}^t  \|v_x(\tau)\|_{B_{p,r}^{s-1}}d\tau,\quad if ~~ s>1+\frac{1}{p}(or~~ s=1+\frac{1}{p} ~~and~~ r=1).
		\end{array}
		\right.
	\end{eqnarray*}
	\emph{(2)} If $f=v$, then (1) holds true for all $s > 0$ with $V(t)=\int_{0}^t \|v_x(\tau)\|_{L^\infty}d\tau$.
\end{lem}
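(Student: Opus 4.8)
The plan is to prove this classical transport estimate by the Littlewood--Paley method: localize \eqref{LTE} in frequency, turn each dyadic block into a transport equation of its own, run an $L^p$ energy estimate block by block, and reassemble in the Besov norm once the commutator between convection and frequency localization is under control. First I would apply the operator $\Delta_q$ of Proposition \ref{DB} to \eqref{LTE} and write the convection term as $\Delta_q(v\partial_x f)=v\partial_x\Delta_q f+R_q$, where $R_q:=[\Delta_q,v]\partial_x f=\Delta_q(v\partial_x f)-v\partial_x\Delta_q f$ is the commutator remainder. Then each block solves the transport equation
\[
\partial_t\Delta_q f+v\partial_x\Delta_q f=\Delta_q F-R_q ,
\]
driven by the same velocity field $v$, so that $v$ enters the estimates only through $\partial_x v$.

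Next I would perform the $L^p$ energy estimate for each block. Multiplying by $|\Delta_q f|^{p-2}\Delta_q f$, integrating in $x$, and integrating the convection term by parts (with a standard regularization argument covering the endpoints $p=1,\infty$) yields the pointwise-in-$q$ inequality
\[
\frac{d}{dt}\|\Delta_q f\|_{L^p}\leq\|\Delta_q F\|_{L^p}+\|R_q\|_{L^p}+\frac1p\|\partial_x v\|_{L^\infty}\|\Delta_q f\|_{L^p},
\]
which after integration in time bounds $\|\Delta_q f(t)\|_{L^p}$ by its data, the forcing, the commutator, and a self-transport term proportional to $\|\partial_x v\|_{L^\infty}$.

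The heart of the matter, and the step I expect to be the main obstacle, is the commutator estimate controlling $R_q$ after multiplication by $2^{qs}$ and summation in $\ell^r$. Using Bony's paraproduct decomposition of $v\partial_x f$ together with the spectral support properties of $\varphi(2^{-q}\cdot)$ recorded in Proposition \ref{DB}, one establishes a bound of the form
\[
\Big\|\big(2^{qs}\|R_q\|_{L^p}\big)_q\Big\|_{\ell^r}\leq C\,W\,\|f\|_{B_{p,r}^s},
\]
with $W=\|\partial_x v\|_{B_{p,r}^{s-1}}$ when $s>1+\tfrac1p$ and $W=\|\partial_x v\|_{B_{p,r}^{1/p}\cap L^\infty}$ in the low-regularity range. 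The hypothesis $s>-\min\{1/p,1-1/p\}$ is exactly what makes both the remainder and the paraproduct pieces summable, and the case split here is precisely what produces the two branches in the definition of $V(t)$; note also that the self-transport term from the energy estimate is dominated by the same $W$, since $\|\partial_x v\|_{L^\infty}\leq\|\partial_x v\|_{B_{p,r}^{1/p}\cap L^\infty}$ and, for $s>1+\tfrac1p$, $\|\partial_x v\|_{L^\infty}\leq C\|\partial_x v\|_{B_{p,r}^{s-1}}$ by the embedding in Lemma \ref{EMT}.

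Finally I would multiply the integrated block inequality by $2^{qs}$, take the $\ell^r$ norm in $q$, and move that norm inside the time integral by Minkowski's inequality (valid since $r\geq1$). Combining with the commutator bound gives
\[
\|f(t)\|_{B_{p,r}^s}\leq\|f_0\|_{B_{p,r}^s}+\int_0^t\|F(\tau)\|_{B_{p,r}^s}\,d\tau+C\int_0^t V'(\tau)\|f(\tau)\|_{B_{p,r}^s}\,d\tau,
\]
which is statement (1), and the exponential form follows from Gronwall's inequality applied to $t\mapsto\|f(t)\|_{B_{p,r}^s}$. For statement (2), where $f=v$, the transported quantity and the velocity coincide, so the commutator $[\Delta_q,v]\partial_x v$ enjoys the sharper bound $2^{qs}\|R_q\|_{L^p}\leq C c_q\|\partial_x v\|_{L^\infty}\|v\|_{B_{p,r}^s}$ with $(c_q)\in\ell^r$ of unit norm, valid for every $s>0$ because the extra derivative is absorbed by $\partial_x v\in L^\infty$ rather than by a Besov norm of $\partial_x v$; repeating the reassembly then yields (1) with $V(t)=\int_0^t\|v_x(\tau)\|_{L^\infty}\,d\tau$.
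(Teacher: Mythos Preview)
The paper does not supply its own proof of this lemma; it is recorded in the Appendix as a known result cited from \cite{BCD2011,Dan2001,Dan2003}. Your outline is the standard Littlewood--Paley/commutator argument found in those references and is correct in all essential points: localize with $\Delta_q$, run the $L^p$ energy estimate on each block, bound the commutator $[\Delta_q,v]\partial_x f$ via Bony's decomposition (which is where the constraint $s>-\min\{1/p,1-1/p\}$ and the case split on $V(t)$ enter), reassemble in $\ell^r$ by Minkowski, and close with Gronwall.
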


\begin{lem}\label{EU}\emph{\cite{BCD2011,Dan03}}
	Let $p, r, s, f_0$ and $F$ be as in the statement of Lemma \ref{T}. Assume that $v\in L^\rho([0,T];B_{\infty,\infty}^{-M})$ for some $\rho>1$, $M>0$ and $v_x\in L^1([0,T];B_{p,r}^{s-1})$ if $s>1+\frac{1}{p}$ or $s=1+\frac{1}{p}$, $r=1$ and $v_x\in L^1([0,T];B_{p,\infty}^{1/p}\cap L^\infty)$ if
	$s<1+\frac{1}{p}$. Then, equation \eqref{LTE} have a unique solution
	$f\in L^\infty([0,T];B_{p,r}^s)\cap C([0,T];B_{p,1}^{s'})$ for any $s'<s$ and the inequalities of Lemma \ref{T} hold true. Moreover, if $r<\infty$, then $f\in C([0,T];B_{p,r}^{s})$.
\end{lem}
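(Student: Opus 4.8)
The plan is to derive uniqueness directly from the a priori estimates of Lemma \ref{T}, and to obtain existence by a low-frequency truncation (Friedrichs-type) approximation scheme combined with a compactness/Fatou argument, recovering the strong time-continuity at the endpoint regularity last.

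First I would dispose of uniqueness. Given two solutions $f_1,f_2$ in the stated class with the same data $(f_0,F)$, their difference $w=f_1-f_2$ belongs to $L^\infty([0,T];B_{p,r}^{s})\cap C([0,T];\mathscr{S}')$ and solves the homogeneous transport equation $w_t+vw_x=0$ with $w(0)=0$. Applying the estimate of Lemma \ref{T}(1) to $w$ gives $\|w(t)\|_{B_{p,r}^{s}}\le e^{CV(t)}\|w(0)\|_{B_{p,r}^{s}}=0$ for every $t$, so $f_1=f_2$. The low-regularity hypothesis $v\in L^\rho([0,T];B_{\infty,\infty}^{-M})$ is precisely what makes the product $vw_x$ and the notion of a distributional solution meaningful and guarantees $w\in C([0,T];\mathscr{S}')$, so that the transport estimate applies.

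For existence I would regularize by spectral cutoff. Set $v^{(n)}=S_n v$, $F^{(n)}=S_n F$, $f_0^{(n)}=S_n f_0$ and solve
\[
\partial_t f^{(n)}+v^{(n)}\partial_x f^{(n)}=F^{(n)},\qquad f^{(n)}(0)=f_0^{(n)}.
\]
Since $v^{(n)}$ is smooth with bounded derivatives, the characteristic flow exists and this problem has a classical, $\mathscr{S}'$-valued solution. The key point is that low-frequency truncation does not enlarge the transport norm $V(t)$, so Lemma \ref{T}(1) yields a bound on $\|f^{(n)}\|_{L^\infty([0,T];B_{p,r}^{s})}$ uniform in $n$. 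Next I would show $(f^{(n)})$ is Cauchy in $C([0,T];B_{p,r}^{s-1})$: the difference $f^{(n+k)}-f^{(n)}$ solves a transport equation with drift $v^{(n+k)}$ and source $(S_{n+k}-S_n)F-(v^{(n+k)}-v^{(n)})\partial_x f^{(n)}$; applying Lemma \ref{T} at regularity $s-1$ and using that $(S_{n+k}-S_n)F\to 0$ and $v^{(n+k)}-v^{(n)}\to 0$ in the relevant norms, together with the uniform bound on $\partial_x f^{(n)}$ in $B_{p,r}^{s-1}$, closes a Gronwall estimate forcing the Cauchy property. Passing to the limit produces $f\in C([0,T];B_{p,r}^{s-1})$ solving \eqref{LTE} in $\mathscr{S}'$; the Fatou (lower-semicontinuity) property of the Besov norm promotes $f$ to $L^\infty([0,T];B_{p,r}^{s})$, and interpolation (Lemma \ref{EMT}(6)) between $B_{p,r}^{s-1}$ and $B_{p,r}^{s}$ gives $f\in C([0,T];B_{p,1}^{s'})$ for every $s'<s$. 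The inequalities of Lemma \ref{T} then transfer to $f$ by the same lower-semicontinuity.

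The delicate step, which I expect to be the main obstacle, is upgrading $L^\infty$-in-time to strong continuity $f\in C([0,T];B_{p,r}^{s})$ when $r<\infty$. Here the equation gives $\partial_t f=F-v f_x\in L^\infty([0,T];B_{p,r}^{s-1})$, so each dyadic block $t\mapsto \Delta_q f(t)$ is continuous in $L^p$, whence the finite-frequency truncations are continuous into $B_{p,r}^{s}$. To control the tail uniformly in $t$ one exploits $r<\infty$: the uniform $B_{p,r}^{s}$-bound makes $\sum_{q\ge N}2^{qsr}\|\Delta_q f(t)\|_{L^p}^r$ small uniformly in $t$ for $N$ large (dominated convergence over dyadic indices). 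Combining uniform smallness of the tail with continuity of the truncation gives $t\mapsto f(t)$ continuous into $B_{p,r}^{s}$; for $r=\infty$ this tail control fails, consistent with the statement claiming only $C([0,T];B_{p,1}^{s'})$ there. The one remaining routine-but-careful check is that the limit genuinely solves \eqref{LTE} in $\mathscr{S}'$, i.e. that $v^{(n)}f^{(n)}_x\to v f_x$ in the distributional sense, which again relies on the low-regularity hypothesis on $v$.
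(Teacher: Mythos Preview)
The paper does not supply a proof of this lemma: it is stated in the Appendix as a quotation from \cite{BCD2011,Dan03} and is used as a black box. So there is no ``paper's own proof'' to compare against. That said, your sketch follows the standard route one finds in those references (Friedrichs regularization for existence, the a~priori estimate of Lemma~\ref{T} for uniqueness, Fatou for the endpoint bound, and a tail argument for strong continuity when $r<\infty$), and the overall architecture is correct.

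One step deserves more care. In your upgrade to $C([0,T];B_{p,r}^{s})$ you write that the uniform $B_{p,r}^{s}$-bound alone makes $\sum_{q\ge N}2^{qsr}\|\Delta_q f(t)\|_{L^p}^r$ small uniformly in $t$ ``by dominated convergence.'' A uniform bound in $L^\infty_t B_{p,r}^{s}$ does \emph{not} give uniform smallness of the dyadic tail: the $\ell^r$-mass could migrate to high frequencies as $t$ varies. The usual argument localizes the equation with $\Delta_q$, writes $\partial_t\Delta_q f+v\partial_x\Delta_q f=\Delta_q F+[v\partial_x,\Delta_q]f$, and uses commutator estimates to get $2^{qs}\|\Delta_q f(t)\|_{L^p}\le e^{CV(t)}\bigl(2^{qs}\|\Delta_q f_0\|_{L^p}+\int_0^t 2^{qs}\|\Delta_q F\|_{L^p}+c_q(\tau)V'(\tau)\|f(\tau)\|_{B_{p,r}^s}\,d\tau\bigr)$ with $(c_q)\in\ell^r$. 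Since $r<\infty$, the tails of $(2^{qs}\|\Delta_q f_0\|_{L^p})_q$ and of $t\mapsto(2^{qs}\|\Delta_q F(t)\|_{L^p})_q$ are small, and summing in $\ell^r$ gives the desired uniform-in-$t$ tail control. With this correction your argument is in line with the cited sources.
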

\begin{lem}[Moser-type estimates]\emph{\cite{Che2004,Dan2003,Dan03}}\label{Moser}
	Let $1\leq p,r\leq\infty$, then we have the following estimates:\\
	\emph{ (1)} For $s>0$, $\|fg\|_{B_{p,r}^{s}}\leq C(\|f\|_{B_{p,r}^{s}}\|g\|_{L^\infty}+\|f\|_{L^\infty}\|g\|_{B_{p,r}^{s}})$, $\forall f,g \in B_{p,r}^{s}\cap L^\infty$.\\
	\emph{(2)} For all $s_1\leq\frac{1}{p}< s_2$ ($s_2\geq\frac{1}{p}$ if $r=1$) and $s_1+s_2>0$,
	$$\|fg\|_{B_{p,r}^{s_1}}\leq C\|f\|_{B_{p,r}^{s_1}}\|g\|_{B_{p,r}^{s_2}}, \forall f\in B_{p,r}^{s_1}, g\in B_{p,r}^{s_2}.$$
\end{lem}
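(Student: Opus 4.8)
The plan is to prove both estimates by Bony's paraproduct decomposition together with the Littlewood--Paley characterisation of the Besov norm from Definition \ref{besov} and Proposition \ref{DB}. For two tempered distributions one writes
$$fg=T_fg+T_gf+R(f,g),\qquad T_fg=\sum_{q}S_{q-1}f\,\Delta_qg,\qquad R(f,g)=\sum_{|j-q|\le1}\Delta_jf\,\Delta_qg,$$
and estimates the three pieces separately. The only analytic facts needed are: if a distribution is Fourier-supported in an annulus $\{|\xi|\sim 2^{q}\}$ then its $B_{p,r}^{s}$ norm equals the $\ell^{r}$ norm of $(2^{qs}\|\text{its }q\text{th block}\|_{L^{p}})_{q}$; if it is supported in a ball $\{|\xi|\le C2^{q}\}$ the same holds provided $s>0$, so the overlap of low frequencies is summable; $\|S_{q-1}f\|_{L^{\infty}}\le\|f\|_{L^{\infty}}$, and more generally $\|S_{q-1}f\|_{L^{\infty}}\le C2^{q(1/p-s_{1})}\|f\|_{B_{p,r}^{s_{1}}}$ for $s_{1}\le1/p$ by Bernstein's inequality; and the H\"{o}lder bound $\|\Delta_jf\,\Delta_qg\|_{L^{p}}\le\|f\|_{L^{\infty}}\|\Delta_qg\|_{L^{p}}$.

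For part (1): the $q$th term of $T_fg$ is spectrally localised in an annulus of size $2^{q}$, so the annulus estimate combined with $\|S_{q-1}f\,\Delta_qg\|_{L^{p}}\le\|f\|_{L^{\infty}}\|\Delta_qg\|_{L^{p}}$ gives $\|T_fg\|_{B_{p,r}^{s}}\le C\|f\|_{L^{\infty}}\|g\|_{B_{p,r}^{s}}$, and by symmetry $\|T_gf\|_{B_{p,r}^{s}}\le C\|g\|_{L^{\infty}}\|f\|_{B_{p,r}^{s}}$. Each term of $R(f,g)$ with $|j-q|\le1$ is supported in a ball of radius comparable to $2^{q}$; bounding it by $\|f\|_{L^{\infty}}\|\Delta_qg\|_{L^{p}}$ and invoking the ball estimate, which needs exactly the hypothesis $s>0$, yields $\|R(f,g)\|_{B_{p,r}^{s}}\le C\|f\|_{L^{\infty}}\|g\|_{B_{p,r}^{s}}$. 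Summing the three contributions proves (1).

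For part (2): since $s_{2}>1/p$ (or $s_{2}\ge1/p$ with $r=1$), Lemma \ref{EMT}(4) gives $B_{p,r}^{s_{2}}\hookrightarrow L^{\infty}$, so $\|g\|_{L^{\infty}}\le C\|g\|_{B_{p,r}^{s_{2}}}$ and $T_gf$ is treated as in (1) at regularity $s_{1}$. The new ingredient is $T_fg$ with $f$ only in $B_{p,r}^{s_{1}}$, $s_{1}\le1/p$: using $\|S_{q-1}f\|_{L^{\infty}}\le C2^{q(1/p-s_{1})}\|f\|_{B_{p,r}^{s_{1}}}$, the $q$th block of $T_fg$ has size at most $C2^{q(1/p-s_{1})}\|f\|_{B_{p,r}^{s_{1}}}\|\Delta_qg\|_{L^{p}}$, hence $T_fg\in B_{p,r}^{s_{1}+s_{2}-1/p}$, which embeds in $B_{p,r}^{s_{1}}$ because $s_{2}\ge1/p$. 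The remainder $R(f,g)$ must now be estimated keeping both factors at their natural regularities (H\"{o}lder into $L^{p/2}$ or $L^{1}$ according to $p\ge2$ or $p<2$, then Bernstein back to $L^{p}$); the resulting dyadic sum over the low-frequency overlap converges precisely because $s_{1}+s_{2}>0$, and a final embedding places it in $B_{p,r}^{s_{1}}$. Collecting the pieces gives (2).

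The argument is essentially bookkeeping; the main obstacle is organisational -- keeping track of which of the two factors is carried in $L^{\infty}$ and which in a positive-regularity Besov space, and checking the two endpoint regimes, namely $s_{1}=1/p$ with $r=1$ (where the geometric series controlling $\|S_{q-1}f\|_{L^{\infty}}$ degenerates and the $\ell^{1}$ summation must be used) and the borderline behaviour of $R(f,g)$ as $s_{1}+s_{2}\downarrow0$. Since this is the classical Moser/product estimate, one may alternatively cite \cite{Che2004,Dan2003,Dan03} directly; the sketch above merely records the self-contained route.
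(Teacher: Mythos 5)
The paper itself offers no proof of this lemma: it is recalled in the Appendix and attributed to \cite{Che2004,Dan2003,Dan03}, so the only comparison to make is with the standard argument in those references, and your Bony-decomposition sketch is exactly that argument. For part (1), and for part (2) in the range $p\ge 2$, your outline is sound: the paraproduct pieces $T_fg$, $T_gf$ are handled correctly, the hypothesis $s>0$ enters only through the ball-supported remainder, and in (2) the bound $\|S_{q-1}f\|_{L^\infty}\lesssim 2^{q(1/p-s_1)}\|f\|_{B^{s_1}_{p,r}}$ together with $s_2\ge 1/p$ is the right key step.

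The one genuine soft spot is the remainder in part (2) when $1\le p<2$. Your proposed route (H\"older into $L^1$, then Bernstein back to $L^p$ on the $k$-th block) produces the factors $2^{k(1-1/p)}$ and $2^{q(2/p-1)}$, and the resulting sum over $q\gtrsim k$ closes only under $s_1+s_2>\frac{2}{p}-1$, not merely $s_1+s_2>0$; you also cannot fall back on the $L^{p/2}$ argument there, since $\Delta_k$ is a convolution operator and is not bounded on $L^{p/2}$ for $p<2$. This is precisely why the sharp form of this product law is usually stated with the condition $s_1+s_2>\max\left(0,\frac{2}{p}-1\right)$; as written, your claim that ``$s_1+s_2>0$'' suffices in the $p<2$ regime is a gap (the same caveat in fact applies to the lemma as quoted in the paper, though it is harmless for its applications here, where $s_1+s_2$ is large). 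A smaller slip: the parenthetical ``$r=1$'' in the statement concerns the endpoint $s_2=1/p$ (needed for $B^{s_2}_{p,1}\hookrightarrow L^\infty$ and for summing the factors $2^{k(1/p-s_2)}$), not $s_1=1/p$; the endpoint $s_1=1/p$ with $r>1$ is instead saved by the strict inequality $s_2>1/p$, which absorbs the logarithmic loss in $\|S_{q-1}f\|_{L^\infty}$.
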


\begin{lem}\emph{\cite{Dan2003,Dan03}}\label{ine}
	There is a constant $C > 0$ such that for $s\in\mathbb{R}$, $\varepsilon>0$ and $1\leq p\leq \infty$,
	$$\|f\|_{B_{p,1}^{s}}\leq C\frac{1+\varepsilon}{\varepsilon}
	\|f\|_{B_{p,\infty}^{s}}\ln\left(e+\frac{\|f\|_{B_{p,\infty}^{s+\varepsilon}}}{\|f\|_{{B_{p,\infty}^{s}}}}\right),\quad f\in B_{p,\infty}^{s+\varepsilon}.$$
\end{lem}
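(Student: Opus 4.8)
The plan is to establish this logarithmic interpolation inequality by the classical dyadic splitting argument: decompose the $B_{p,1}^{s}$ sum over Littlewood--Paley blocks into a low-frequency part ($q<N$) and a high-frequency part ($q\ge N$), estimate these by the $B_{p,\infty}^{s}$ and $B_{p,\infty}^{s+\varepsilon}$ norms respectively, and then optimize over the integer cutoff $N$. Since both sides are homogeneous of degree one under $f\mapsto \mu f$, I may normalize $\|f\|_{B_{p,\infty}^{s}}=1$ (the case $f=0$ being trivial), which reduces the target to $\|f\|_{B_{p,1}^{s}}\le C\frac{1+\varepsilon}{\varepsilon}\ln(e+X)$ with $X:=\|f\|_{B_{p,\infty}^{s+\varepsilon}}$.

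First I would write $\|f\|_{B_{p,1}^{s}}=\sum_{q\ge -1}2^{qs}\|\Delta_q f\|_{L^p}$ and split at a nonnegative integer $N$ to be chosen. For the low frequencies, each term obeys $2^{qs}\|\Delta_q f\|_{L^p}\le \|f\|_{B_{p,\infty}^{s}}$, so the low part is at most $(N+1)\|f\|_{B_{p,\infty}^{s}}$. For the high frequencies I insert the factor $2^{-q\varepsilon}2^{q(s+\varepsilon)}$ and pull out the $B_{p,\infty}^{s+\varepsilon}$ norm, reducing the tail to a geometric series:
$$\sum_{q\ge N}2^{qs}\|\Delta_q f\|_{L^p}\le \|f\|_{B_{p,\infty}^{s+\varepsilon}}\sum_{q\ge N}2^{-q\varepsilon}=\frac{2^{-N\varepsilon}}{1-2^{-\varepsilon}}\,\|f\|_{B_{p,\infty}^{s+\varepsilon}}.$$
Using the elementary inequality $1-e^{-x}\ge x/(1+x)$ (equivalently $1+x\le e^x$) with $x=\varepsilon\ln 2$ gives $\frac{1}{1-2^{-\varepsilon}}\le \frac{C(1+\varepsilon)}{\varepsilon}$, which is exactly where the prefactor $\frac{1+\varepsilon}{\varepsilon}$ in the statement originates.

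Combining the two bounds yields $\|f\|_{B_{p,1}^{s}}\le (N+1)\|f\|_{B_{p,\infty}^{s}}+\frac{C(1+\varepsilon)}{\varepsilon}2^{-N\varepsilon}\|f\|_{B_{p,\infty}^{s+\varepsilon}}$. I then choose $N$ to balance the two terms by taking $N$ to be the smallest nonnegative integer with $2^{-N\varepsilon}X\le 1$, that is $N=\max\{0,\lceil (\ln X)/(\varepsilon\ln 2)\rceil\}$. With this choice the high-frequency term is controlled by $\frac{C(1+\varepsilon)}{\varepsilon}\|f\|_{B_{p,\infty}^{s}}$, and since $\ln(e+X)\ge 1$ this is absorbed into the right-hand side. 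For the low-frequency term I use $N+1\le \frac{\ln X}{\varepsilon\ln 2}+2\le \frac{C(1+\varepsilon)}{\varepsilon}\ln(e+X)$, again because $\frac{1+\varepsilon}{\varepsilon}>1$ and $\ln X\le \ln(e+X)$ when $X\ge 1$; the degenerate regime $X\le 1$ is handled uniformly by simply setting $N=0$.

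I expect the only delicate part to be the bookkeeping of the two $\varepsilon$-dependent factors, so that the final constant genuinely takes the form $C\frac{1+\varepsilon}{\varepsilon}$ uniformly in $\varepsilon>0$: one factor $1/\varepsilon$ comes from the geometric tail $\frac{1}{1-2^{-\varepsilon}}$, the other from the count $N\sim (\ln X)/\varepsilon$ of retained low-frequency blocks, and I must check that both behave correctly in the limits $\varepsilon\to 0^{+}$ and $\varepsilon\to\infty$. The use of $e+(\cdot)$ rather than a bare logarithm is precisely what keeps $\ln(e+X)\ge 1$ nonvanishing in the regime $X\le 1$ and lets me treat the cases $X\le 1$ and $X>1$ on the same footing. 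No machinery beyond the definitions of $B_{p,1}^{s}$ and $B_{p,\infty}^{s}$ and the summation of a geometric series is needed.
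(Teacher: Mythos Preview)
Your argument is correct and is precisely the standard proof of this logarithmic interpolation inequality. Note, however, that the paper does not give its own proof of this lemma: it is stated in the Appendix with a citation to Danchin's work \cite{Dan2003,Dan03} and used as a black box. Your dyadic splitting with optimization over the cutoff $N$ is exactly the argument found in those references, so there is nothing to contrast.
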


\begin{lem}[Osgood Lemma]\emph{\cite{BCD2011,Che1998}}\label{Osgood}
	Let $\rho$ be a measurable function from $[t_0,T]$ to $[0, c]$, $\gamma$ be a locally integrable
	function from $[t_0,T]$ to $\mathbb{R}^+$, $\mu$ be a continuous and nondecreasing function from $[0, c]$ to $\mathbb{R}^+$, and $a\geq 0$ be a real number. Assume that $\rho$ satisfies
	\begin{eqnarray*}
		\rho(t)\leq a +\int_{t_0}^t \gamma(t')\mu(\rho(t'))dt', \quad a.e. \quad t\in [t_0,T].
	\end{eqnarray*}
	If $a>0$, then for a.e. $t\in [t_0,T]$ we have
	\begin{eqnarray*}
		-\mathcal{M}(\rho(t))+\mathcal{M}\mathcal(a)\leq \int_{t_0}^{t}\gamma(t')dt',\quad with \quad \mathcal{M}(x)=\int_x^c\frac{dr}{\mu(r)}.
	\end{eqnarray*}
	If $a=0$ and $\mu$ satisfies the condition $\int_0^1\frac{dr}{\mu(r)}=+\infty$, then $\rho=0$ a.e. $t\in [t_0,T]$.

\end{lem}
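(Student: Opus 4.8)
The plan is to prove both assertions by the classical comparison device: replace $\rho$ by the absolutely continuous majorant built from the right-hand side of the hypothesis, derive a differential inequality for it, and then linearise that inequality through the substitution governed by $\mathcal{M}$.

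First I would treat the case $a>0$. Set $R(t):=a+\int_{t_0}^t\gamma(t')\mu(\rho(t'))\,dt'$. Since $\rho$ takes values in $[0,c]$ and $\mu$ is continuous there, we have $\mu(\rho)\le\mu(c)<\infty$, so $\gamma\,\mu(\rho)$ is integrable and $R$ is absolutely continuous, nondecreasing, with $R(t_0)=a$ and $R(t)\ge a>0$. By hypothesis $\rho(t)\le R(t)$, and because $\mu$ is nondecreasing this gives $R'(t)=\gamma(t)\mu(\rho(t))\le\gamma(t)\mu(R(t))$ for a.e. $t$. On the compact interval $[a,c]$ the function $\mu$ is continuous and (by the standing positivity of $\mu$ away from the origin) bounded below by $\min_{[a,c]}\mu>0$, so $\mathcal{M}(x)=\int_x^c dr/\mu(r)$ is $C^1$ there with $\mathcal{M}'(x)=-1/\mu(x)<0$; in particular $\mathcal{M}$ is strictly decreasing.

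The key step is then to differentiate $\mathcal{M}\circ R$. Composing the $C^1$ map $\mathcal{M}$ with the absolutely continuous map $R$ yields an absolutely continuous function for which the chain rule holds a.e.:
\[
\frac{d}{dt}\mathcal{M}(R(t))=-\frac{R'(t)}{\mu(R(t))}=-\frac{\gamma(t)\mu(\rho(t))}{\mu(R(t))}\ge-\gamma(t),
\]
where the last inequality uses $\mu(\rho(t))\le\mu(R(t))$. Integrating from $t_0$ to $t$ and using $R(t_0)=a$ gives $-\mathcal{M}(R(t))+\mathcal{M}(a)\le\int_{t_0}^t\gamma(t')\,dt'$. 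Since $\rho(t)\le R(t)$ and $\mathcal{M}$ is decreasing we have $\mathcal{M}(\rho(t))\ge\mathcal{M}(R(t))$, hence $-\mathcal{M}(\rho(t))\le-\mathcal{M}(R(t))$, and the asserted inequality for $a>0$ follows. For the case $a=0$ with $\int_0^1 dr/\mu(r)=+\infty$, I would argue by approximation: for every $\varepsilon>0$ the hypothesis also holds with $a$ replaced by $\varepsilon$, so the previous step yields $\mathcal{M}(\rho(t))\ge\mathcal{M}(\varepsilon)-\int_{t_0}^t\gamma(t')\,dt'$ for a.e. $t$. The divergence condition forces $\mathcal{M}(\varepsilon)=\int_\varepsilon^c dr/\mu(r)\to+\infty$ as $\varepsilon\to0^+$, so the right-hand side tends to $+\infty$; if $\rho(t)>0$ at such a $t$ then $\mathcal{M}(\rho(t))$ is finite, a contradiction, whence $\rho(t)=0$ for a.e. $t\in[t_0,T]$.

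The absolute continuity of $R$ and the sign bookkeeping are routine. The step that deserves the most care is the a.e. chain rule for $\mathcal{M}\circ R$: it rests on $\mathcal{M}$ being genuinely $C^1$ on the range of $R$, which requires $\mu$ to stay strictly positive there — automatic once $R\ge a>0$ in the first case, and handled by the $\varepsilon$-shift in the second. The other delicate point is the limit $\varepsilon\to0^+$, where the entire conclusion hinges on the non-integrability $\int_0 dr/\mu(r)=+\infty$ to blow up $\mathcal{M}(\varepsilon)$ and thereby pin $\rho$ to zero.
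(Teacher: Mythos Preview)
The paper does not supply a proof of this lemma; it is quoted in the Appendix as a standard tool from the cited references \cite{BCD2011,Che1998}. Your argument is exactly the classical proof found there --- pass to the absolutely continuous majorant $R$, linearise via $\mathcal{M}\circ R$, and for $a=0$ shift to $a=\varepsilon$ and exploit $\mathcal{M}(\varepsilon)\to+\infty$ --- so there is nothing to compare.

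One small technical point you skated over: nothing in the hypotheses keeps $R(t)$ inside $[0,c]$, so $\mu(R(t))$ and $\mathcal{M}(R(t))$ are not a priori defined, and the chain-rule step needs $R$ to stay in the domain of $\mathcal{M}$. This is harmless --- either extend $\mu$ by the constant $\mu(c)$ on $[c,\infty)$ (which preserves monotonicity and makes $\mathcal{M}$ well-defined and still decreasing there), or observe that once $\int_{t_0}^{t}\gamma\ge\mathcal{M}(a)$ the asserted inequality $\mathcal{M}(\rho(t))\ge\mathcal{M}(a)-\int_{t_0}^{t}\gamma$ is automatic since $\mathcal{M}(\rho(t))\ge\mathcal{M}(c)=0$ --- but it deserves a sentence.
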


\begin{lem}[Fatou-type lemma]\emph{\cite{Dan2003,Dan03}}\label{Fatou}
	If $\{u^{(k)}\}$ is a bounded sequence in $B_{p,r}^s$, which tends to $u$ in $\mathscr{S}'$, then $u\in B_{p,r}^s$ and
	\begin{eqnarray*}
		\|u\|_{B_{p,r}^s}\leq \liminf_{k\rightarrow\infty} \|u^{(k)}\|_{B_{p,r}^s}.
	\end{eqnarray*}
\end{lem}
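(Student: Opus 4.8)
The plan is to reduce the statement to a block-by-block estimate on the Littlewood--Paley pieces and then reassemble through the classical Fatou lemma on the counting measure. Recall from Definition \ref{besov} that the norm is the $\ell^r$ norm (weighted by $2^{qs}$) of the sequence $(\|\Delta_q f\|_{L^p})_{q}$. The first observation I would record is that each dyadic operator $\Delta_q=\varphi(2^{-q}D)$ acts by convolution against the Schwartz function $2^{qd}h(2^q\cdot)$ with $h=\mathcal{F}^{-1}\varphi$, and hence is continuous on $\mathscr{S}'$. Therefore the hypothesis $u^{(k)}\to u$ in $\mathscr{S}'$ immediately yields $\Delta_q u^{(k)}\to\Delta_q u$ in $\mathscr{S}'$ for every fixed $q$.

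The key step is the lower semicontinuity at the level of a single block: for each fixed $q$,
\[
\|\Delta_q u\|_{L^p}\leq\liminf_{k\to\infty}\|\Delta_q u^{(k)}\|_{L^p}.
\]
To get this I would first note that the $B_{p,r}^s$-boundedness of $\{u^{(k)}\}$ forces $\sup_k\|\Delta_q u^{(k)}\|_{L^p}\leq 2^{-qs}\sup_k\|u^{(k)}\|_{B_{p,r}^s}<\infty$, so $\{\Delta_q u^{(k)}\}_k$ is bounded in $L^p$ for each $q$. I would then argue by duality: for any $\phi\in\mathscr{S}$ Hölder's inequality gives $|\langle\Delta_q u^{(k)},\phi\rangle|\leq\|\Delta_q u^{(k)}\|_{L^p}\|\phi\|_{L^{p'}}$, and letting $k\to\infty$ (using the $\mathscr{S}'$ convergence of the blocks) produces $|\langle\Delta_q u,\phi\rangle|\leq\big(\liminf_k\|\Delta_q u^{(k)}\|_{L^p}\big)\|\phi\|_{L^{p'}}$. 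Taking the supremum over $\phi\in\mathscr{S}$ with $\|\phi\|_{L^{p'}}\leq 1$ and invoking the $L^p$--$L^{p'}$ duality recovers the displayed block estimate; in particular $\Delta_q u\in L^p$ with the stated bound.

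To assemble the full norm I would treat $r<\infty$ and $r=\infty$ separately. For $r<\infty$, set $a_q^{(k)}:=2^{qsr}\|\Delta_q u^{(k)}\|_{L^p}^r$; the block estimate gives $2^{qsr}\|\Delta_q u\|_{L^p}^r\leq\liminf_k a_q^{(k)}$ for each $q$, and the classical Fatou lemma applied to the counting measure on $q$ yields
\[
\|u\|_{B_{p,r}^s}^r=\sum_q 2^{qsr}\|\Delta_q u\|_{L^p}^r\leq\sum_q\liminf_k a_q^{(k)}\leq\liminf_k\sum_q a_q^{(k)}=\liminf_k\|u^{(k)}\|_{B_{p,r}^s}^r.
\]
For $r=\infty$ the block estimate directly gives $2^{qs}\|\Delta_q u\|_{L^p}\leq\liminf_k\|u^{(k)}\|_{B_{p,\infty}^s}$ for each $q$, and taking the supremum over $q$ finishes the case. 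Since $\{u^{(k)}\}$ is bounded the right-hand liminf is finite, so $u\in B_{p,r}^s$.

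The delicate point is the block-level lower semicontinuity when $p\in\{1,\infty\}$, where $L^p$ is not reflexive and one cannot simply extract a weakly convergent subsequence. The duality formulation above is designed precisely to sidestep this: it only uses the pairing of the already identified limit $\Delta_q u$ against Schwartz test functions together with the uniform $L^p$ bound, and the representation $\|g\|_{L^p}=\sup\{|\langle g,\phi\rangle|:\phi\in\mathscr{S},\ \|\phi\|_{L^{p'}}\leq 1\}$ is valid for every $1\leq p\leq\infty$ because each $\Delta_q u$ is a band-limited, hence continuous, $L^p$ function. The only care needed is the density of $\mathscr{S}$ in the relevant predual, which holds in all cases with $\Delta_q u$ locally well behaved.
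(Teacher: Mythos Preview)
The paper does not supply a proof of this lemma; it is quoted from Danchin's lecture notes and paper (references \cite{Dan2003,Dan03}) without argument, so there is nothing to compare against. Your proof is the standard one and is correct: continuity of $\Delta_q$ on $\mathscr{S}'$, block-wise lower semicontinuity of the $L^p$ norm via duality, and then Fatou on the counting measure to reassemble the $\ell^r$ sum.

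One small remark on the closing paragraph. For $p=1$ the predual is $L^\infty$, in which $\mathscr{S}$ is \emph{not} dense, so the line ``density of $\mathscr{S}$ in the relevant predual'' is not quite the right justification. What actually closes the gap is exactly the band-limited observation you already made: the bound $|\langle\Delta_q u,\phi\rangle|\leq C\|\phi\|_{L^\infty}$ for $\phi\in\mathscr{S}$ identifies $\Delta_q u$ with a finite Radon measure (via density of $\mathscr{S}$ in $C_0$ and Riesz representation), and since $\Delta_q u$ has compact Fourier support it is a smooth function, hence that measure is absolutely continuous and $\Delta_q u\in L^1$ with $\|\Delta_q u\|_{L^1}\leq C$. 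Alternatively, for $g\in L^1$ one can recover $\|g\|_{L^1}$ by testing against Schwartz approximations of $\mathrm{sgn}(g)$ truncated to compact sets, using dominated convergence rather than density in $L^\infty$. Either way the conclusion stands.
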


\end{document}